\DeclareMathOperator*{\argmax}{arg\,max}
\newtheorem{assumption}{Assumption}
\let\originaleqref\eqref
\renewcommand{\eqref}{\originaleqref}
\newcommand{\vmax}{v_{\max}}
\newcommand{\vmin}{v_{\min}}
\newcommand{\dd}{\text{d}}
\newcommand{\indic}[1]{\mathbbm{1}[#1]}
\newcommand{\comment}[1]{}
\begin{document}
\title{Deterministic Refund Mechanisms}
%
%
\author{Saeed Alaei\inst{1} \and
Shuchi Chawla\inst{2} \and
Zhiyi Huang\inst{2} \and
Ali Makhdoumi\inst{3} \and
Azarakhsh Malekian\inst{4} }

\authorrunning{Alaei et al.}
%
\institute{Google Research \and
University of Texas at Austin \and
Duke University \and
University of Toronto}
\maketitle              
\begin{abstract}
We consider a mechanism design setting with a single item and a single buyer who is uncertain about the value of the item. Both the buyer and the seller have a common model for the buyer's value, but the buyer discovers her true value only upon receiving the item. Mechanisms in this setting can be interpreted as randomized refund mechanisms, which allocate the item at some price and then offer a (partial and/or randomized) refund to the buyer in exchange for the item if the buyer is unsatisfied with her purchase. Motivated by their practical importance, we study the design of optimal deterministic mechanisms in this setting. We characterize optimal mechanisms as virtual value maximizers for both continuous and discrete type settings. We then use this characterization, along with bounds on the menu size complexity, to develop efficient algorithms for finding optimal and near-optimal deterministic mechanisms.

\keywords{Return policies \and Deterministic mechanism design \and Online retailers \and Approximate optimality.}
\end{abstract}

\section{Introduction}

Buying personal items like clothing or shoes online can be fraught with risk -- without the opportunity to see and try the item before purchasing, the buyer cannot be sure whether the item is worth the price they paid. Online retail platforms like Amazon ease the buyers' dilemma and encourage spending by sharing their risk and offering flexible and generous return options. Returns and refunds are especially important when it is difficult for the platform to provide all of the information the buyer may need to assess whether the item satisfies their needs. For example, clothing is typically marked with one of a handful of size options, but this classification is coarse and may not fit all body types equally well. Likewise, while online retailers often provide pictures of the items they sell, these are not always true to color. Other traits like sturdiness cannot be judged through pictures at all. There is also variation across buyers in terms of how much risk or variation they are willing to undertake. Some buyers are very particular in their needs, whereas others would happily accept a range of qualities and parameters. 

Return policies can help bring more buyers and encourage more spending on a retail platform. But they can also potentially hurt the platform's revenue by allowing buyers access to extra information that they would otherwise not have. Given this tradeoff, as well as uncertainty in buyer needs and item attributes, how should a platform design a practical and near optimal return/refund policy?



This paper examines the design of return and refund schemes as a mechanism design problem where the goal is to maximize the seller's expected revenue. We consider a stylized model for the sale of a single item to a single buyer drawn from a heterogeneous population. The item has attributes that are drawn from a known distribution. The value of the buyer for the item depends on both her type as well as the item's attributes. However, neither the seller nor the buyer can directly observe the attributes before the sale happens. The selling mechanism, therefore, has two phases. The first phase happens after the buyer discovers her type but before she observes her true value for the item. At this phase, the seller allocates the item to the buyer with some probability at some price. The second phase happens after the buyer has seen the item and discovered her true value. At this point, the seller may offer the buyer the opportunity to return the item with some probability and with some additional transfer. 
We show that mechanisms in this model can be interpreted as menus where each option corresponds to an initial price paired with a random refund, which the buyer can obtain upon returning the item. 

Our work focuses on the {\bf computational complexity of optimizing revenue over deterministic refund schemes}. Although optimal mechanisms in general may use randomness, such randomized refund schemes are rarely, if ever, seen in practice. In many mechanism design problems, optimizing over deterministic mechanisms is harder than optimizing over randomized mechanisms because the latter is general a convex set of possible solutions whereas the former is not. Nevertheless, for refund mechanisms we show that, under suitable assumptions, one can efficiently optimize over deterministic mechanisms. 

At a technical level our approach is based on two components. First, we show how to express the mechanism design problem in terms of suitably defined virtual values by deriving a payment identity. Then we provide menu size complexity results for the optimal mechanism in various settings of interest. These combined together yield efficient dynamic programming based algorithms for finding the optimal deterministic refund scheme.


\paragraph{Ordered types.} We now describe the key assumption that drives our results. Each buyer in the market has a type that specifies the distribution of her (a priori unknown) value for the item. We assume that these types are ordered in that a buyer of higher type has a value distribution that first-order stochastically dominates the value distribution at any lower type.
One example of such an "ordered types" setting is when the item has a single ordered attribute (a.k.a. its quality) and the value of the buyer is a weakly increasing function of both her type and the item's quality. So, a better quality item brings (weakly) higher value to the buyer; and a buyer of higher type values the item at any given quality at a (weakly) higher level than a lower type buyer. 


\paragraph{An interdimensional model.} Our ordered types assumption places our model somewhere in between single dimensional and multi dimensional value models in mechanism design. There has been much work recently on so-called "interdimensional" settings in algorithmic mechanism design. Our model most closely resembles the ordered items setting in \cite{chawla2022pricing} and the second example above closely mimics the setting of the FedEx problem of \cite{fiat2016fedex}. However, there are important differences. In our setting, the item type is drawn exogenously and the seller has no control over it, whereas in these prior works the seller controls which item type (i.e., quality) the buyer receives. Thus, incentive constraints in our setting are quite different from those in previous work on interdimensional settings.

\paragraph{Connections with information design.} Our work is also related to information design settings where the seller has information about the item that informs the buyer's value but is not available directly to the buyer such as \cite{daskalakis2016does}. In these settings, the seller can strategically decide how much information to share with the buyer prior to the sale of the item. Refunds can form an important component of mechanisms in this space, potentially simplifying and improving the design of information sharing schemes. In particular, the seller can couple sharing partial information about the item prior to the sale with the offer of a partial refund after the sale. This connection between information design and refunds has not been explored adequately, and we expect that our analysis can guide such connections. 

We discuss connections to prior work in more detail in Section \ref{sec:Related}.

\paragraph{Menu size.} A key driver of our efficient algorithms is the observation that the menu size of the optimal deterministic refund scheme is small. Note that the effective menu size of any mechanism is at most the number of distinct buyer types, $m$ -- menu options that are not bought by any buyer can simply be dropped. However, when $m$ is large, we can nevertheless ensure a small menu size in many contexts. In particular, we consider three discrete settings. In the first, the buyers' values are drawn from a small discrete set of size $n$; we show that in this case the optimal mechanism offers at most $n$ options. In our second model, we assume that the number of values is large but the item is drawn uniformly from $k$ possible types, where $k$ is a small number.\footnote{One example of this setting is a t-shirt with $5$ possible sizes -- XS, S, M, L, and XL. Another example is a shoe that comes in half-integer sizes ranging from 7-13.} In this case, we show that the optimal mechanism has at most $k+1$ options. Our final model again assumes that there are only $k$ possible item types (or qualities) for some small integer $k$, but these are ordered, with each buyer having a weakly  higher value for a higher quality item. In this case, even for arbitrary (non-uniform) distributions over item types, the optimal mechanism contains at most $k+1$ menu options.

We note that in a real world setting a seller might want to further limit the number of menu options it offers in order for the mechanism to be simple for buyers to use. We show that the optimal deterministic mechanism with at most $c$ options can be computed in time polynomial in $c$ and other relevant parameters of the problem.











\subsection{Related literature}\label{sec:Related}
Our paper is closely related to the literature on sequential screening studied in \cite{courty2000sequential}, \cite{esHo2007optimal}, \cite{akan2015revenue},  \cite{bergemann2017scope}, and more recently \cite{bergemann2020scope}, and \cite{von2022optimal}. In particular, these papers study a setting where a consumer sequentially learns her value but knows its distribution at the time of contracting. This literature has focused on deriving simple conditions for the optimality of a deterministic mechanism. For the sake of completeness, we provide such conditions for our formulation as well. We then depart from this literature by asking a complementary question: In the class of deterministic mechanisms, how can we find the optimal one? 


Our paper also relates to the literature on third-degree price discrimination, such as \cite{schmalensee1981output} and more recently \cite{aguirre2010monopoly}, \cite{krahmer2015optimal}, \cite{bergemann2015limits}, \cite{cummings2020algorithmic}, and \cite{bergemann2022third} that analyze the limits of price discrimination and the extent to which a seller can gain by price discrimination. In our setting, by designing a refund mechanism and allowing the buyer to try the item, the seller enables a form of price discrimination: the buyer learns the item value privately, and if the seller designs the mechanism properly, this information can be used to exercise a form of price discrimination. In contrast to the above papers, however, in our setting, both the seller and buyer initially have the same information about the underlying values, and then only the buyer privately gains more information. 


Moreover, our paper relates to studies exploring return policies, such as those by  \cite{inderst2013sales}, \cite{zhang2013revenue}, and \cite{escobari2014price}. The research most closely related to ours is the works by \cite{che1996customer} and \cite{hinnosaar2020robust}. Specifically, \cite{che1996customer} devises a model to scrutinize the optimality of allowing full refunds and establish that they are optimal if consumers are sufficiently risk-averse or if retail costs are high. \cite{hinnosaar2020robust}, on the other hand, contemplates the formulation of a "robust" mechanism (in accordance with \cite{bergemann2005robust}) for a consumer with binary value and finds the mechanism that works well against the worst-case distribution of consumer values. These papers differ in both modeling and analysis but are complementary to our emphasis on characterizing the (approximately) optimal deterministic return policy.


Our paper also relates to the literature on interdimensional mechanism design --- a multiparameter setting where one
parameter is the value of the buyer and others capture some side information that impacts the buyer's value. A leading
 example of this setting is the FedEx Problem of \cite{fiat2016fedex}. Other examples of this are when one of the parameters specifies the buyer’s budget \cite{devanur2017optimal}, \cite{devanur2017optimalb} and when the buyer obtains value from a specific set of items (single-minded buyer) \cite{devanur2020optimal}. Both our formulation and results are different from these papers. In particular, unlike the above papers, the buyer in our setting does not initially have full knowledge about the parameters and gains extra information through the mechanism that the seller designs.\footnote{Less directly related is a line of work that aims to characterize optimal mechanisms beyond single dimensional settings \cite{chawla2010multi}, \cite{briest2010pricing}, \cite{giannakopoulos2014duality}, \cite{daskalakis2013mechanism}, \cite{haghpanah2015reverse}, \cite{malakhov2009optimal}, and \cite{babaioff2017menu}, among others.}

Finally, our paper relates to the literature at the intersection of information design/data acquisition and mechanism design. The settings in which the buyer can acquire more information about her valuation (e.g., at a cost) have been studied in \cite{cremer2009auctions},  \cite{armstrong2016search}, \cite{golrezaei2017auctions}, \cite{alaei2021revenue}, and \cite{lyu2021optimal}, among others. In particular, \cite{lyu2021optimal} study the optimal refund mechanism when an uninformed buyer can privately acquire information about his valuation over time. Settings in which the seller can gain more information about the buyer value and then (strategically) reveals it to the buyer have been studied in   \cite{ben2014optimal}, \cite{li2017discriminatory}, and \cite{li2020mechanism} among others (see also \cite{bergemann2005information} for a survey). Our work is different from both of the above settings as the buyer in our setting has the ability to gain more information, but the design choice of the seller impacts this ability. 

\section{Problem Formulation}\label{sec:formulation}
\paragraph{Buyer types and values.} We consider a multi-dimensional mechanism design setting with a single buyer and a single item. The buyer's value for the item depends on her own type as well as the (random and a priori unknown) attributes of the item. Since buyer types can be renamed, we assume without loss of generality that types lie in $[0,1]$, and are drawn from the uniform 
distribution over a (suitably sized) subset of this range. We use $t$ and $v$ to denote the buyer's type and value respectively, and $T\subseteq [0,1]$ and $V\subseteq \mathbb{R}^+$, respectively, to denote the sets of all possible item types and values. When these sets are discrete, we denote their sizes by $m:=|T|$ and $n:=|V|$. The item's attributes are collectively called its type and denoted by $s$. 

Once the buyer's type $t$ is realized, her value $v$ for the item is distributed according to CDF $G_t$ with density function $g_t$. 
Note that the conditional distribution $G_t$ encodes the composition of the distribution of item attributes and the function mapping the buyer's type $t$ and the item attributes $s$ to the buyer's value. In most of the paper we suppress the details of this composition because the buyer's behavior and the seller's mechanism design problem depend only on the buyer's conditional value distribution.

\paragraph{Ordered buyer types.} To make our model tractable, we assume that the buyer's type space is ordered in that buyers with higher types have higher values compared to buyers with lower types. One way to formalize this is to require that a buyer with type $t$ values {\em every} item type higher than a buyer with type $t'<t$. This approach is somewhat restrictive because, for example, it disallows settings where buyers of different types prefer different item types.\footnote{Suppose, for example, that the item has two types, Red and Blue, distributed uniformly. Suppose that a buyer of type $t_1$ has a value $v_1>0$ for Red and $0$ for Blue, whereas a buyer of type $t_2>t_1$ has a value of $0$ for Red and $v_2$ for Blue. Then, in this example $t_2$ does not value each of Red and Blue higher than $t_1$ but still satisfies our ordered types assumption as long as $v_2>v_1$.} We impose a weaker condition, namely that the value distribution conditioned on $t$ dominates in the first order sense the value distribution conditioned on $t'$. Formally:

\begin{assumption}\label{assump:valuefunction} {\bf (Ordered types)}
For all pairs of buyer types $t, t'\in T$ with $t'<t$, $G_{t}$ first order stochastically dominates $G_{t'}$: for all $v\in V$, $G_t(v)\le G_{t'}(v)$. 
\end{assumption}

\paragraph{Settings of interest.}
We will consider two settings that differ in how the buyer's type and the item's attributes are distributed.
\begin{itemize}
    \item {\bf Continuous differentiable setting.}\label{model:continuous} In this setting, $T=[0,1]$. For all $t\in T$, the value distribution $G_t$ is continuous and differentiable over $\mathbb{R}^+$, that is, the density function $g_t$ is bounded. Finally, for every $v\in V$, $G_t(v)$ is differentiable with respect to $t$, that is, $\frac{\dd G_t(v)}{\dd t}$ is well defined.
    \item {\bf Discrete setting.}\label{model:discrete} In this setting, there is a finite number of discrete buyer types and possible values, i.e., $m,n<\infty$. For every $t\in T$, the distribution $G_t$ is an arbitrary discrete distribution over $V$.
\end{itemize}

\subsection{Return policy and refund mechanisms}\label{sec:returnpolicy}
\paragraph{\bf The timeline of information acquisition and the format of a generic mechanism.} We assume that both the buyer and the seller have common knowledge of the distributions $G_t$. The seller cannot directly observe the buyer or item type. The buyer observes her type $t$ at the beginning of the mechanism, but can only observe the item type (and consequently, her value) if and when she receives the item from the seller. We will focus on the class of (dominant strategy) incentive compatible mechanisms; following the revelation principle, we can restrict our attention to direct revelation mechanisms. 

A generic mechanism in this setting proceeds as follows. First, the buyer and seller learn the common information, $\{G_t(\cdot)\}_{t\in T}$. The seller selects and announces a mechanism. In the first phase, the buyer learns her instantiated type $t$. If she decides to participate, she reports a type $t'$ to the seller. The seller allocates the item with probability $x(t')$ and receives a payment $p(t')$ from the buyer. If the buyer does not receive the item, the mechanism terminates. Otherwise, in the second phase, the buyer observes her value $v\sim G_t$ and reports a value $v'$ to the seller. The buyer retains the item with probability $\bar{z}(t',v')$ and the seller makes a transfer of $\bar{r}(t',v')$ to the buyer. The buyer receives a net allocation probability of $\bar{z}(t',v')x(t')$; and makes a net payment of $p(t')-x(t')\bar{r}(t',v')$ to the seller.

Observe that we can simplify the space of mechanisms to those that always allocate the item to the buyer in the first phase, without losing generality. In particular, given the functions $(p, x, \bar{z}, \bar{r})$, let us define for all $v$ and $t$, $z(t,v):=\bar{z}(t,v)x(t)$ and $r(t,v):=\bar{r}(t,v)x(t)$, and let $x^*$ denote the function that maps all types $t$ to $1$. Then, we note that the mechanisms $(p,x,\bar{z},\bar{r})$ and  $(p, x^*, z, r)$ are equivalent in their final outcomes for {\em every} instantiation of buyer types, item types, and buyer reports. Henceforth, we will drop the argument $x$ and describe a mechanism as a tuple  $(p, z, r)$. 

\begin{definition}[Return policy mechanism]
    A return policy mechanism is defined by three mappings $t \mapsto p(t)\in \mathbb{R}^+\cup\{0\}$, $(t,v) \mapsto r(t,v) \in \mathbb{R}^+\cup\{0\}$, and $(t,v) \mapsto z(t,v) \in [0,1]$ with the buyer-seller interaction defined as in the timeline above. We call the second phase allocation and transfer tuple, $\{z(t,v), r(t,v)\}_{t\in T, v\in V}$, the mechanism's {\em return policy}.

\end{definition}

\paragraph{\bf Refund mechanisms.} Since the focus of this work is on deterministic mechanisms, we will now present a simpler formulation that is more germane to the deterministic setting. In a {\em refund mechanism}, the seller's return policy is to simply offer the buyer a (randomized) refund in exchange for the item. 

\begin{definition}[Refund mechanism]
    A refund mechanism is defined by a menu $(\{p_i, R_i\})$, where the $i$th option consists of real-valued pair of a price $p_i$ and a random refund $R_i$. In the first phase, the buyer chooses a menu option $\{p_i, R_i\}$, makes the payment $p_i$ to the seller, and acquires the item. In the second phase, the seller instantiates $R_i$ and offers this refund to the buyer. The buyer returns the item and accepts the refund if her value is smaller than $R_i$ and otherwise keeps the item with no additional transfers. 
\end{definition}

We prove the following equivalence theorem in the appendix.

\begin{theorem}
\label{thm:return-refund-equiv}
Any return policy mechanism can be implemented as a (randomized) refund mechanism and vice versa.
\end{theorem}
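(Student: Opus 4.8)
The plan is to anchor both directions on one elementary fact. Fix a reported buyer type $t$: in a return policy mechanism the second phase is then a one--dimensional mechanism in the realized value $v$, with ``allocation'' $z(t,\cdot)\in[0,1]$ and ``payment charged to the buyer'' equal to $-r(t,\cdot)$; in a refund mechanism an option $(p_t,R_t)$ induces exactly such a one--dimensional mechanism, with allocation $v\mapsto\Pr[R_t\le v]$ and net payment $v\mapsto p_t-\mathbb E[R_t\,\indic{R_t>v}]$, using the identity $\mathbb E_{\rho\sim R_t}[\max(v,\rho)]=v\Pr[R_t\le v]+\mathbb E[R_t\,\indic{R_t>v}]$. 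So for each direction it suffices to match these one--dimensional objects type by type.

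For the easy direction (refund $\Rightarrow$ return policy), given a menu $(\{p_i,R_i\})$ let $i^\star(t)$ be an option that a type-$t$ buyer (who at that point knows only $G_t$) weakly prefers, and read off $z(t,v):=\Pr[R_{i^\star(t)}\le v]$, $r(t,v):=\mathbb E[R_{i^\star(t)}\,\indic{R_{i^\star(t)}>v}]$, and $p(t):=p_{i^\star(t)}$. Then $z(t,\cdot)$ is monotone, a short integration-by-parts computation shows $(z(t,\cdot),-r(t,\cdot))$ obeys the Myerson payment identity, so reporting $v$ truthfully in the second phase (which is precisely the rule ``return iff $v<R_i$'') is optimal; first-phase incentive compatibility is immediate from the choice of $i^\star$; and by construction the allocation and the net transfer $p(t)-r(t,v)$ reproduce the refund mechanism's outcome for every $t$ and $v$.

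For the other direction (return policy $\Rightarrow$ refund), restrict by the revelation principle to truthful reporting. Second-phase incentive compatibility forces, for each $t$, that $z(t,\cdot)$ is nondecreasing in $v$ together with the payment identity $r(t,v)=r(t,\underline v)+\int_{\underline v}^{v}z(t,s)\,\dd s-v\,z(t,v)+\underline v\,z(t,\underline v)$ in the continuous setting (and the corresponding incentive inequalities in the discrete setting). One then builds $R_t$ as, essentially, the random variable whose CDF agrees with $z(t,\cdot)$ on $V$: between consecutive values its conditional mean is forced to equal $(r(t,v_j)-r(t,v_{j+1}))/(z(t,v_{j+1})-z(t,v_j))$, which the incentive inequalities place in $[v_j,v_{j+1}]$, while the residual probability $1-\lim_{v\to\sup V}z(t,v)$ is parked just above $\sup V$. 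Writing $\delta_t:=\mathbb E[R_t\,\indic{R_t>v}]-r(t,v)$ (constant in $v$ by the payment identity) and setting the option price to $p_t:=p(t)+\delta_t$, the allocation is $z(t,v)$ and the net transfer is $p_t-\mathbb E[R_t\,\indic{R_t>v}]=p(t)-r(t,v)$, so outcomes match exactly. First-phase incentive compatibility is preserved because a type-$t$ buyer's utility from option $t'$ equals $\mathbb E_{v\sim G_t}[v\,z(t',v)+r(t',v)+\delta_{t'}]-p_{t'}=\mathbb E_{v\sim G_t}[v\,z(t',v)+r(t',v)]-p(t')$ --- the additive $\delta_{t'}$ cancels against the price --- which is exactly her return-policy utility from reporting $t'$, hence maximized at $t'=t$.

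The main obstacle is making this last construction airtight: one must check that the forced conditional means and the parked residual mass really do assemble into a genuine (nonnegative, finite-expectation) random variable that simultaneously matches both $z(t,\cdot)$ and $r(t,\cdot)$ at every value of $V$. This is exactly where the incentive inequalities are used, and it calls for care with ties (when an upward incentive constraint binds, a chunk's mean is pushed to the left endpoint of its interval), with atoms at $\underline v$, and --- in the continuous or unbounded-$V$ case --- with whether $z(t,v)\to1$ so that the residual mass can be parked finitely. Measurability of $t\mapsto R_t$ (so that the menu is well defined) and running the discrete and continuous settings in parallel are the remaining bookkeeping.
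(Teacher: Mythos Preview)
Your proposal is correct and shares the paper's core construction: in the nontrivial direction, take $R_t$ to be a random variable whose CDF on $V$ coincides with $z(t,\cdot)$ (the paper does this via inverse-transform sampling, setting $R_t$ to the smallest $S$ with $z(t,S)\ge X$ for $X\sim\mathrm{Unif}[0,1]$), so that the buyer keeps the item with probability exactly $z(t,v)$.

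Where you differ is in how the argument is closed. The paper does \emph{not} attempt to match transfers pointwise: it simply observes that $M$ and $M'$ have identical allocations, defines the menu prices via the payment identity \eqref{eq:PI-p}, and then argues directly that the residual degree of freedom (the choice of $r(t,\vmax)$) cancels out of the revenue expression, so the two mechanisms earn the same expected revenue. You instead go for exact outcome equivalence: you tune the conditional mean of $R_t$ on each interval $(v_j,v_{j+1}]$ so that $\mathbb E[R_t\,\indic{R_t>v}]$ differs from $r(t,v)$ by a constant $\delta_t$, and absorb $\delta_t$ into the upfront price. Your route gives a stronger conclusion (same allocation \emph{and} same net transfer for every realized $(t,v)$, hence same utilities, not just same expected revenue), at the cost of the extra bookkeeping you flag about conditional means landing at interval endpoints, ties, and parked residual mass. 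The paper's route is shorter but leans on the payment-identity machinery developed elsewhere in the appendix.
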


\paragraph{\bf Incentive compatibility.} We use the notion of subgame perfect equilibrium to define incentive compatibility. We assume that the buyer is a quasilinear utility maximizer. Following backward induction, let us first analyze the buyer's utility maximization problem in Phase 2 of the mechanism. Let $t$ and $v$ denote the buyer's true type and value respectively, and suppose the buyer reported type $t'$ in the first phase. Then, in the second phase, reporting value $v'$ brings the buyer a Phase 2 utility of $r(t',v')+v\cdot z(t',v')$. Observe that we did not include the Phase 1 payment $p(t')$ in this expression because the buyer's action in the second phase (of reporting her value) does not affect her Phase 1 payment. The buyer maximizes her Phase 2 utility by reporting 
\begin{align*}
    v^*(t,t',v) := \argmax_{v'}\{ r(t',v')+v\cdot z(t',v')\}.
\end{align*}
Incentive compatibility in Phase 2 requires that if the buyer truthfully reports her type in Phase 1, her Phase 2 utility is maximized by truthfully reporting her value. That is, $v^*(t,t',v) = v$ for all $v,t,t'$. We now turn to Phase 1 incentive compatibility. We will use $Q(t,t')$ to denote the expected utility the buyer receives in Phase 2 if her true type is $t$ and she reports $t'$. In particular,
\begin{align}
    Q(t,t') := \mathbb{E}_{v\sim G_t}\left[\max_{v'}\left\{r(t',v')+  v\cdot z(t',v')\right\}\right] = \mathbb{E}_{v\sim G_t}\left[r(t',v)+  v\cdot z(t',v)\right]. \label{eq:Def-Q}
\end{align}
We can then write the expected utility of the buyer in Phase 1 as $-p(t')+Q(t, t')$. Incentive compatibility requires that this expected utility is maximized at $t'=t$.

\paragraph{\bf The seller's revenue maximization problem.} 
The seller's revenue maximization problem can now be written as the following program that we call \eqref{eq:formulation}.

\begin{align}
    \max_{p(\cdot), r(\cdot, \cdot), z(\cdot, \cdot)} & \mathbb{E}_{(t,v)}\left[p(t) - r(t,v)  \right] \label{eq:formulation}\tag{Opt-Rev}\\
    \text{ s.t. } & -p(t) + Q(t,t) \ge -p(t') + Q(t,t') \text{ for all } t,t' \notag\\
    & -p(t) + Q(t,t) \ge 0 \text{ for all } t \notag \\
    & r(t,v)+ z(t,v) v \ge r(t,v')+ z(t,v') v \text{ for all } t, v,v' \notag
    \\
    & z(t,v) \in [0,1]\, \text{ for all } t, v,  \notag
\end{align}
where the objective is the expected upfront payment minus the expected refunds, the first constraint is the incentive compatibility in the first phase, ensuring that the buyer with type $t$ reports truthfully. The second constraint is the individual rationality ensuring that the buyer will participate in the mechanism, and the third constraint is the incentive compatibility in the second phase, ensuring that the buyer reports the item type $s$ truthfully. Finally, the last constraint guarantees that the allocation probabilities are valid.

\subsection{Deterministic refund mechanisms and incentive compatibility}

In this paper, we are interested in {\em deterministic} refund mechanisms, where the refunds $R_i$ in every menu option are deterministic. Given a menu $(\{p_i, R_i\})$ and a buyer type $t$, we use $p(t)$ and $R(t)$ to denote the price and refund associated with the menu option that the buyer of type $t$ selects. The deterministic refund mechanism then implements the allocation rule $z$ where $z(t,v) = \indic{v\ge R(t)}$; in other words, the buyer of type $t$ purchases the option $(p(t), R(t))$ at a price of $p(t)$ and then keeps the item if and only if her instantiated value exceeds the refund $R(t)$. 

On the other hand, given a $\{0,1\}$-allocation $z(t,\cdot)$ satisfying certain incentive constraints, we can determine a deterministic refund menu implementing this allocation function through appropriate payment identities. In particular, for any type $t$, the $\{0,1\}$-allocation $z(t,\cdot)$ can be implemented by posting a refund of $R(t)$ defined as follows:
\begin{align*}
            R(t) \triangleq  \begin{cases}
            \min\{v : z(t,v)\ge 0\} \quad & \text{if } z(t,1)=1 \\
            v_{\max}  \quad & \text{if } z(t, 1)= 0.\\ 
            \end{cases}
        \end{align*}
The payment $p(t)$ can be determined from the payment identities derived in the appendix and restated below for the continuous and discrete setting respectively. Here, $Q(t,t') := \mathbb{E}_{v\sim G_t}[\max\{v, R(t')\}]$ and $Q_{t'}(\cdot, \cdot)$ is the derivative of $Q$ with respect to its second argument.
\begin{align}
        p(t) & = p(0) + \int_{0}^t Q_{t'} (\tau, \tau) d \tau.& \forall t \in T \label{eq:PI-p}\tag{PI-p-cont} \\
            p(t_i) & =p(t_1)+\sum_{x=2}^{m} (Q(t_{x},t_{x})-Q(t_{x},t_{x-1}))& \forall i \in [m] \label{eq:PI-p-disc}\tag{PI-p-disc}
\end{align}

\noindent
The following theorem establishes a simple monotonicity condition on IC deterministic mechanisms.

\begin{theorem}
\label{thm:det-ic}
    The following statements for a deterministic mechanism $(p, R)$ with allocation function $z$ are equivalent:
    \begin{enumerate}
        \item The mechanism satisfies IC.
        \item $z(t,v)\in \{0,1\} \text{ is weakly increasing in } t \text{ and } v$.
        \item $R(t)$ weakly decreases in $t$ and $p(t)$ satisfies the identity \eqref{eq:PI-p} or \eqref{eq:PI-p-disc}.
    \end{enumerate}
\end{theorem}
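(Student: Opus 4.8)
The plan is to view a deterministic refund mechanism as a one–dimensional screening problem in the reported type $t'$, with the menu option indexed by its refund $R(t')$ and priced by $p(t')$, and then apply the Myerson / Milgrom--Segal machinery; the \emph{ordered types} assumption will supply the single–crossing (submodularity) structure that this machinery needs. Throughout, $p$ is the price attached to $R$ via \eqref{eq:PI-p} (continuous) or \eqref{eq:PI-p-disc} (discrete), as in the construction preceding the theorem, so that statements (2) and (3) differ only in whether monotonicity is expressed through $z$ or through $R$.

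First I would discharge the second–phase constraint. With a deterministic refund $R(t)$, reporting a value $v'$ in Phase~2 yields payoff $v$ when $v'\ge R(t)$ and $R(t)$ when $v'<R(t)$; hence truthful reporting attains $\max\{v,R(t)\}$ and is optimal for every $t,t',v$. So Phase–2 IC is automatic, $z(t,v)=\indic{v\ge R(t)}$ is automatically $\{0,1\}$–valued and weakly increasing in $v$, and ``IC'' in (1) reduces to the Phase–1 constraint $-p(t)+Q(t,t)\ge -p(t')+Q(t,t')$ for all $t,t'$, with $Q(t,t')=\mathbb{E}_{v\sim G_t}[\max\{v,R(t')\}]=R(t')+\int_{R(t')}^{\vmax}(1-G_t(s))\,\dd s$. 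Moreover $z(\cdot,v)$ is weakly increasing in $t$ for every $v$ iff $R(\cdot)$ is weakly decreasing: the forward direction is immediate, and conversely, if $R(t)<R(t')$ for some $t<t'$, picking $v\in[R(t),R(t'))$ gives $z(t,v)=1>0=z(t',v)$. This identifies the monotonicity contents of (2) and (3).

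It remains to show that Phase–1 IC is equivalent to ``$R$ weakly decreasing and $p$ satisfying the payment identity.'' The structural input, and the only place Assumption~\ref{assump:valuefunction} is used, is that $\partial_R Q(t,R)=G_t(R)$ is weakly decreasing in $t$, i.e.\ $Q$ is submodular in $(t,R)$. For the \emph{forward} direction in the continuous setting, adding the two IC inequalities for a pair $t>t'$ gives $\int_{R(t')}^{R(t)}\bigl(G_t(s)-G_{t'}(s)\bigr)\,\dd s\ge 0$, and since the integrand is $\le 0$ this forces $R(t)\le R(t')$; and the Milgrom--Segal envelope theorem applied to $u(t)=\max_{t'}\{-p(t')+Q(t,t')\}=-p(t)+Q(t,t)$ gives $u$ absolutely continuous with $u'(t)=\partial_1 Q(t,t)=\int_{R(t)}^{\vmax}(-\partial_t G_t(s))\,\dd s$ a.e., whence, writing $p(t)=Q(t,t)-u(t)$ and differentiating, $p'(t)=\partial_2 Q(t,t)=Q_{t'}(t,t)$, which is \eqref{eq:PI-p}. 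For the \emph{converse}, if $R$ is weakly decreasing and $p$ satisfies \eqref{eq:PI-p}, then substituting the identity for $p(t)-p(t')$ turns the desired Phase–1 inequality (for $t>t'$) into the statement that $\int_{t'}^{t}\bigl(\partial_2 Q(t,\tau)-\partial_2 Q(\tau,\tau)\bigr)\,\dd\tau\ge 0$, which holds because $\partial_2 Q(t,\tau)-\partial_2 Q(\tau,\tau)=R'(\tau)\bigl(G_t(R(\tau))-G_\tau(R(\tau))\bigr)\ge 0$ for $\tau\le t$ (using $R'\le 0$ and $G_t\le G_\tau$), and symmetrically for $t<t'$. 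The discrete setting is the same argument with sums replacing integrals and finite differences replacing derivatives: submodularity makes the adjacent downward constraints the binding ones, their telescoped sum is exactly \eqref{eq:PI-p-disc}, monotonicity of $R$ again follows from adjacent IC, and adjacent IC plus monotonicity implies all IC by single–crossing. Chaining these gives $(1)\Leftrightarrow$ Phase–1 IC $\Leftrightarrow$ [$R$ weakly decreasing $+$ payment identity] $\Leftrightarrow (3)$, and $(3)\Leftrightarrow(2)$ by the previous paragraph.

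The main obstacle is the forward envelope step: one must (i) get the \emph{direction} of monotonicity of $R$ right --- this genuinely rests on first–order stochastic dominance, since the sign of $G_t(s)-G_{t'}(s)$ is what pins it down --- and (ii) recognize that the payment identity involves the \emph{second}–argument derivative $Q_{t'}$, which occurs because the true type enters $Q(t,t)$ both as the distribution index and (at the optimum) as the option index, and the two channels cancel exactly when one differentiates $u(t)=Q(t,t)-p(t)$. A secondary technical point is non–smoothness --- $R(\cdot)$ need not be differentiable and $\max\{v,R\}$ is kinked --- which I would handle by working throughout with the integral form of $Q$ and with absolute continuity of $u$ rather than assuming $R$ differentiable, and, in the discrete case, with the step–function form of $G_t$.
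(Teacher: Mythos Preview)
Your approach is correct and coincides with the paper's: the paper derives the general IC characterization (Lemma~\ref{Lem:PaymentIdentity} and its discrete analogue Lemma~\ref{Lem:PaymentIdentity-disc}) via first-order conditions and then cites Rochet's theorem for sufficiency, with Theorem~\ref{thm:det-ic} left as the specialization to deterministic refunds; your direct route through the integral form $Q(t,t')=R(t')+\int_{R(t')}^{\vmax}(1-G_t)$, the submodularity $\partial_R Q(t,R)=G_t(R)$ decreasing in $t$, and the Milgrom--Segal envelope is exactly what Rochet collapses to in one dimension. The only daylight is cosmetic---you verify sufficiency by the explicit single-crossing integral $\int_{t'}^{t}R'(\tau)\bigl(G_t(R(\tau))-G_\tau(R(\tau))\bigr)\,\dd\tau\ge 0$ rather than appealing to a black-box theorem---and your remark that statement~(2) must be read with $p$ already fixed by the payment identity is the right way to resolve the asymmetry between (2) and (3) in the theorem as stated.

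One minor caveat worth recording: the step ``integrand $\le 0$ forces $R(t)\le R(t')$'' is only strict when $G_t\not\equiv G_{t'}$ on $[R(t'),R(t)]$; in the degenerate equal-CDF case IC tolerates a non-monotone $R$, but the two options are then payoff-equivalent and can be relabeled monotone with no change to payments or revenue---a tie-breaking elision the theorem statement itself makes.
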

\noindent
We observe that $Q(t,t')$ weakly decreases in $t'$ as $R(t')$ weakly decreases in $t'$. Therefore, the payment identities established above imply that the payment function $p$ also decreases weakly in the type of the buyer. In other words, buyers with higher types purchase options with a lower upfront price {\em as well as} a lower refund.

We discuss the complete formulation of IC constraints for return policy mechanisms in the appendix. 

\subsection{A virtual value formulation}
\label{sec:charac:virtualvalue}

The seller's problem  \eqref{eq:formulation} is not a linear optimization problem because $Q(t,t')$ is a nonlinear function of the variables $r(t,v)$ and $z(t,v)$ (because of the max in the expectation). Nonetheless, the seller's problem can be rewritten as a virtual welfare maximization problem with a novel virtual value function. Furthermore, Assumption~\ref{assump:valuefunction} allows us to simplify the IC constraints to obtain a linear formulation of the problem. This in turn drives the algorithmic results stated below. More detailed discussions and proofs are written in the appendix.

\begin{definition}[Virtual value]\label{def:virtual:main}
In the continuous differentiable setting, given the buyer's conditional value distributions $\{G_t\}_{t\in T}$, the virtual value at any pair $(t,v)$ for $t\in T$ and $v\in V$ is defined as follows, where we recall that $F$ is the Unif$[0,1]$ distribution.

    \begin{align}
        \phi(t,v)= v + \frac{1-F(t)}{f(t)} \frac{1}{g_t(v)} \frac{\dd G_t(v)}{\dd t} = v + (1-t) \frac{1}{g_t(v)} \frac{\dd G_t(v)}{\dd t}.
        \label{eq:vv-cont}
        \tag{VV-cont}
    \end{align}
In the discrete setting with the buyer's type set and value set being denoted by $T=\{0=t_1< t_2 <\cdots < t_m=1\}$ and $V=\{v_1 < \cdots < v_n\}$ respectively, the virtual value at $(t_i, v_j)$ for $i\in [m]$ and $j\in [n]$ is given by
\begin{align}
  \phi(t_i,v_j) & = v_j +(m-i) \frac{(v_j-v_{j-1})}{g_{t_i}(v_j)} (G_{t_{i+1}}(v_{j-1})-G_{t_{i}}(v_{j-1}))  \label{eq:vv-disc}
        \tag{VV-disc}
    \end{align}
\end{definition}
\begin{lemma}\label{thm:charac:wo:IC}
    The expected revenue of any IC mechanism $(z,p,r)$ is equal to its expected virtual welfare, $\mathbb{E}_{t,v}\left[ z(t,v) \phi(t,v)\right]$, minus the utility of the buyer of lowest type, $p(0)-Q(0,0)$, with the virtual value $\phi$ as defined above in \eqref{eq:vv-cont} or \eqref{eq:vv-disc}. 
\end{lemma}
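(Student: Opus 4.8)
\medskip
\noindent\textbf{Proof plan.}
The plan is to run the Myerson ``revenue equals virtual welfare'' argument, adapted to the two-phase structure, with Definition~\ref{def:virtual:main} emerging as the object inside the resulting integral. First I would rewrite the objective as expected total welfare minus the buyer's expected information rent, and then eliminate the rent using a payment identity derived from the two incentive-compatibility constraints. For the first step, note that for a truthful buyer of type $t$ the definition of $Q$ gives $\mathbb{E}_{v\sim G_t}[r(t,v)] = Q(t,t) - \mathbb{E}_{v\sim G_t}[v\,z(t,v)]$, so writing $u(t):=Q(t,t)-p(t)\ge 0$ for type $t$'s interim utility the objective of \eqref{eq:formulation} becomes
\begin{equation*}
\mathbb{E}_{t,v}\big[p(t)-r(t,v)\big] \;=\; \mathbb{E}_{t,v}[v\,z(t,v)] - \mathbb{E}_{t}\big[Q(t,t)-p(t)\big] \;=\; \mathbb{E}_{t,v}[v\,z(t,v)] - \mathbb{E}_t[u(t)] ,
\end{equation*}
and it remains only to express $\mathbb{E}_t[u(t)]$ in terms of $z$.

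The engine is a phase-2 envelope identity. By the phase-2 IC constraint, $h_{t'}(v):=\max_{v'}\{r(t',v')+v\,z(t',v')\}=r(t',v)+v\,z(t',v)$, and as a pointwise maximum of affine functions of $v$ this $h_{t'}$ is convex with $h_{t'}'(v)=z(t',v)$. Since $Q(t,t')=\int h_{t'}(v)\,g_t(v)\,\dd v$ with $g_t=\frac{\partial G_t}{\partial v}$, differentiating in the first argument and integrating by parts in $v$ yields
\begin{equation*}
\frac{\partial Q(t,t')}{\partial t} \;=\; \Big[\,h_{t'}(v)\,\frac{\partial G_t(v)}{\partial t}\,\Big]_{\vmin}^{\vmax} - \int z(t',v)\,\frac{\partial G_t(v)}{\partial t}\,\dd v \;=\; -\int z(t',v)\,\frac{\partial G_t(v)}{\partial t}\,\dd v ,
\end{equation*}
the boundary terms vanishing because $G_t(\vmin)=0$ and $G_t(\vmax)=1$ for every $t$. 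The phase-1 envelope applied to $u(t)=\max_{t'}\{Q(t,t')-p(t')\}$ then gives $u'(t)=-\int z(t,v)\,\frac{\partial G_t(v)}{\partial t}\,\dd v$ (equivalently, differentiate \eqref{eq:PI-p}).

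To close the continuous case, Fubini and the uniformity of $F$ give $\mathbb{E}_t[u(t)]=\int_0^1 u(t)\,\dd t = u(0)+\int_0^1(1-\tau)\,u'(\tau)\,\dd\tau$; substituting $u'$ and merging the welfare density $v\,g_t(v)$ with the rent term $(1-t)\frac{\partial G_t(v)}{\partial t}$ exhibits the integrand as $z(t,v)\,g_t(v)\,\phi(t,v)$ with $\phi$ given by \eqref{eq:vv-cont}, so $\mathbb{E}_{t,v}[p(t)-r(t,v)] = \mathbb{E}_{t,v}[z(t,v)\,\phi(t,v)] - u(0)$ with $u(0)=Q(0,0)-p(0)$, which is the claim. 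The discrete case is structurally the same: \eqref{eq:PI-p-disc} together with the IC constraints gives $u(t_{i+1})-u(t_i)=Q(t_{i+1},t_i)-Q(t_i,t_i)$; a summation by parts in the value index, using $\sum_{\ell\le j}\big(g_{t_{i+1}}(v_\ell)-g_{t_i}(v_\ell)\big)=G_{t_{i+1}}(v_j)-G_{t_i}(v_j)$ and the discrete envelope relation $h_i(v_j)-h_i(v_{j-1})=(v_j-v_{j-1})\,z(t_i,v_j)$, rewrites this increment through $z$, and a second summation by parts in the type index converts $\mathbb{E}_t[u(t)]=u(t_1)+\frac1m\sum_{i<m}(m-i)\big(u(t_{i+1})-u(t_i)\big)$ into $\mathbb{E}_{t,v}[z\,\phi]-u(t_1)$ with $\phi$ as in \eqref{eq:vv-disc}; the continuous weight $1-\tau$ is replaced by $\frac{m-i}{m}$ and $\frac{\partial G_\tau(v)}{\partial\tau}$ by $G_{t_{i+1}}(\cdot)-G_{t_i}(\cdot)$.

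I expect the main obstacle to be the phase-2 envelope step in the continuous case --- rigorously differentiating $Q$ under the integral sign, which is exactly where boundedness of $g_t$ and existence of $\frac{\dd G_t(v)}{\dd t}$ are used, and checking that the integration-by-parts boundary terms really do vanish --- together with, in the discrete setting, carrying out the two nested summations by parts with precisely the correct index shifts and the appropriate minimal-refund normalization so that the coefficient of $z(t_i,v_j)$ lands on \eqref{eq:vv-disc} rather than an off-by-one variant.
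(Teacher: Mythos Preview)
Your argument is correct, and it takes a genuinely different route from the paper's. The paper substitutes the payment identities directly into the revenue expression and works with $Q_{t'}$, the derivative of $Q$ in its \emph{second} (reported-type) argument; by the envelope theorem this produces terms $r_t(t',v)$ and $z_t(t',v)$, i.e.\ derivatives of the mechanism itself in the report. It then differentiates \eqref{eq:PI-r} in $t$ to eliminate $r_t$, swaps orders of integration several times, and integrates by parts in $\tau$ to convert $z_t$ back into $z$; a cluster of boundary terms at $t=0$ accumulates and is identified a posteriori with $p(0)-Q(0,0)$ by a separate computation of the type-$0$ utility. You instead decompose revenue as welfare minus rent and differentiate $Q$ in its \emph{first} (true-type) argument, which touches only $\partial G_t/\partial t$ and never the mechanism's own $t$-derivatives; a single integration by parts in $v$ (boundary terms killed by $G_t(\vmin)\equiv 0$, $G_t(\vmax)\equiv 1$) followed by a single integration by parts in $t$ delivers the answer directly. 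Your route is the cleaner, standard Myerson argument: it requires no smoothness of $z$ or $r$ in $t$, and it makes transparent why the ordered-types assumption ($\partial G_t/\partial t\le 0$) forces $\phi(t,v)\le v$. The paper's route, while longer, is more ``operational'' in that it builds the revenue formula up from the payment identities themselves and so doubles as a verification that \eqref{eq:PI-p} and \eqref{eq:PI-r} are internally consistent. Your discrete sketch is also correct; the two summations by parts you describe reproduce \eqref{eq:vv-disc} with the right index shifts (the $j=1$ term contributes only $v_1$ since $G_t(v_0)=0$, and the $i=m$ term drops out since the factor $m-i$ vanishes).
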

Observe that virtual values, as defined above, are always smaller than the buyer's true value, as $G_t(v)$ decreases in $t$ and the second term is negative. Henceforth we will assume that the seller sets $p(0)=Q(0,0)$.

\section{Optimal deterministic mechanism in the discrete model}\label{sec:discrete_few}
We now shift our attention to the discrete model, Model \ref{model:discrete}, focusing on deterministic mechanisms. Recall that in this setting, buyer types and values both belong to discrete sets $T$ and $V$, respectively. Let $m := |T|$ and $n := |V|$. We will further discuss two other models under this discrete setting, separately in Sections 
\ref{sec:discrete_ordered} and  \ref{sec:discrete_uniform}.  For ease of notation, we rename buyer types and values so that $T=[m]$ and $V=\{v_1,v_2,\dots,v_n\}$.


\subsection{Menu size}

We first claim that in this setting, any deterministic mechanism need only have at most $n$ distinct options on its menu.

\begin{lemma}\label{thm:menusize_1}
    For the discrete setting (Model \ref{model:discrete}) with $|V|=n$, any deterministic refund mechanism can be implemented with at most $n$ menu options. Furthermore, we may assume without any loss in the mechanism's revenue, that the refund corresponding to the $i$th option is $v_i$ for $i\in[n]$. 
\end{lemma}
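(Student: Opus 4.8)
The plan is to start from an arbitrary deterministic IC mechanism, described by its allocation function $z(t,\cdot)$ for each type $t$, and show that we may "round" the refunds to lie in the value set $V$ without decreasing revenue. By Theorem~\ref{thm:det-ic}, IC is equivalent to $z(t,v)\in\{0,1\}$ being weakly increasing in both $t$ and $v$; so for each type $t$ there is a threshold refund $R(t)$, weakly decreasing in $t$, with $z(t,v)=\indic{v\ge R(t)}$. Since values only take the discrete values $v_1<\cdots<v_n$, the only thing that matters about $R(t)$ is which interval $(v_{j-1},v_j]$ it lands in; replacing $R(t)$ by the unique value $v_j\in V$ with $z(t,v)=\indic{v\ge v_j}$ changes neither the allocation function $z$ nor (by the payment identity \eqref{eq:PI-p-disc}, which depends only on $Q(t,t')=\mathbb{E}_{v\sim G_t}[\max\{v,R(t')\}]$, itself a function of $z$ alone) the payments. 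Hence without loss of generality every $R(t)\in V$, so the menu has at most $n$ distinct options, and each option's refund is some $v_i$.

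Next I would argue that we may take the $i$th option's refund to be exactly $v_i$, i.e. order the (at most $n$) options by their refunds. The at-most-$n$ distinct threshold values, listed in increasing order, are a subset $\{v_{i_1}<v_{i_2}<\cdots\}$ of $V$; re-indexing the options so that option $i$ carries refund $v_i$ for the refunds that actually appear, and noting that if $v_i$ does not appear we may simply add a dummy option with refund $v_i$ and price equal to that of the next higher option that is purchased (or price $+\infty$), which no buyer strictly prefers, gives a menu of exactly the claimed form with identical revenue. The monotonicity "$R(t)$ weakly decreasing in $t$" from Theorem~\ref{thm:det-ic} guarantees this re-indexing is consistent across types.

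The only genuinely delicate point is making sure the rounding of $R(t)$ into $V$ preserves \emph{both} directions of monotonicity and the exact payments simultaneously. For monotonicity this is automatic: if $R(t)\le R(t')$ then the induced allocations satisfy $z(t,\cdot)\ge z(t',\cdot)$ pointwise on $V$, and the rounded thresholds inherit the same inequality. For payments, the key observation — which I would state explicitly — is that $Q(t,t')$ as defined here equals $\mathbb{E}_{v\sim G_t}[\max\{v,R(t')\}]=\sum_{v\in V}g_t(v)\max\{v,R(t')\}$ is unchanged when $R(t')$ is moved to the largest element of $V$ that is $\le R(t')$ together with... (careful: we must round \emph{down} to preserve $\max\{v,R(t')\}$ for $v\in V$, since for $v\in V$, $\max\{v,R(t')\}=\max\{v,v_j\}$ where $v_j=\max\{v_\ell\in V: v_\ell\le R(t')\}$, with $v_j:=0$ if no such element and we must separately keep the "keep iff $v\ge R(t')$" boundary — note $z(t,v)=\indic{v\ge R(t')}$ and $\indic{v\ge v_j}$ agree on $V$ precisely when $v_j$ is the smallest element of $V$ that is $\ge R(t')$, so in fact $R(t')$ should be rounded \emph{up} to $V\cup\{v_{\max}\}$). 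Reconciling these two requirements is exactly where one must be careful: the correct statement is that $R(t')\in(v_{j-1},v_j]$ may be replaced by $v_j$, and then $\max\{v,v_j\}=\max\{v,R(t')\}$ for all $v\in V$ because no $v\in V$ lies strictly between $R(t')$ and $v_j$. I would write this verification out in one short paragraph, and the rest of the lemma follows.
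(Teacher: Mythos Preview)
Your overall outline---round each threshold $R(t)$ into the value set $V$ so that the allocation $z$ on $T\times V$ is preserved, and conclude there are at most $n$ distinct options---is the right idea and does yield the lemma. But the specific justification you give for why the rounding preserves revenue is wrong. You claim that $Q(t,t')=\mathbb{E}_{v\sim G_t}[\max\{v,R(t')\}]$ is ``a function of $z$ alone,'' and later that $\max\{v,v_j\}=\max\{v,R(t')\}$ for all $v\in V$ whenever $R(t')\in(v_{j-1},v_j]$. The second claim is false: take $v=v_{j-1}$; then $\max\{v_{j-1},R(t')\}=R(t')$ while $\max\{v_{j-1},v_j\}=v_j$, and these differ whenever $R(t')<v_j$. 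So $Q$ \emph{does} change when you round $R(t')$ up to $v_j$, the payments $p$ change via \eqref{eq:PI-p-disc}, and the expected refund paid out changes as well. What you must show is that the \emph{net} effect on revenue is nonnegative; showing that $p$ alone is unchanged is both false and insufficient.

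There are two clean ways to close the gap. The first, and closest to your sketch, is to invoke the virtual-value characterization (\Cref{thm:charac:wo:IC}): once IR binds at the lowest type, revenue equals $\mathbb{E}_{t,v}[z(t,v)\phi(t,v)]$, which depends only on the allocation $z$ restricted to $T\times V$ and is therefore invariant under your rounding. The second is the paper's route: it partitions types into groups $T_i$ according to which interval $(v_{i-1},v_i]$ contains $R(t)$, recomputes prices via \eqref{eq:price-formula} after raising the refund of group $T_i$ toward $v_i$, and then tracks the change in price versus expected refund type by type, using the ordered-types assumption $G_{t_i^*}(v_{i-1})\ge G_t(v_{i-1})$ for $t\ge t_i^*$ to show revenue weakly increases. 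Either argument works; the one you wrote, as stated, does not.
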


Before proving~\cref{thm:menusize_1}, we first describe the main idea behind. Since there are only $n$ different values buyers could take, it divides the range of refund we offered to different types of buyers into $n$ groups, enabling us to also divide buyers into $n$ groups. Within each group, different types of buyers receive identical allocation according to the instantiated value, thus we can combine all the menu options for all types of buyers in this group into a single menu choice. Now we delve into the details of our proof.

\begin{proof}
The starting point is to derive a simple expression for the pricing in this case as a function of the deterministic refunds $R(t)$ defined previously. We first observe that the Phase 1 expected utility functions for deterministic mechanisms can be written as:
\[Q(t,t') = \mathbb{E}_{v\sim G_t}[\max(v,R(t'))].\]
The IC constraints over $t$ and $t'$ with $t>t'$ can then be written as:
\begin{align*}
    Q(t',t)-Q(t',t') & \le p(t)-p(t') \le Q(t,t)-Q(t,t')
\end{align*}
Fixing $R(t)$ for all $t$, we can then maximize the expected revenue by setting $p(t)$ for $t=1, \cdots, m$ in succession according to the tightest upper bound on it. We claim that the tightest upper bound is given by the consecutive IC constraints, that is, by taking $t'=t-1$. In particular, for any fixed choice of $p(1)$, define $p(t)$ for $t>1$ as:
\begin{align}
    p(t)=p(1)+\sum_{y=2}^{t} \{Q(y,y)-Q(y,y-1)\} \label{eq:price-formula}
\end{align}
We claim that this choice of $p(t)$ satisfies all of the IC constraints above. To see this claim, we first observe that the function $Q(t,y)-Q(t,y-1)$ is an increasing function of $t$. This can be seen simply by noting that $\max(v,R(y))-\max(v,R(y-1))$ is an increasing function of $v$ because $R(y)\le R(y-1)$. Then we note that per our definition of $p$, for $t>t'$, $p(t)-p(t') = \sum_{y=t'+1}^{t} \{Q(y,y)-Q(y,y-1)\} \le \sum_{y=t'+1}^{t} \{Q(t,y)-Q(t,y-1)\} = Q(t,t)-Q(t,t')$, thereby proving our claim.

We are now ready to prove the menu size theorem. Consider any deterministic IC mechanism $(\{p(t), R(t)\})$ with payments $p(t)$ as described above. Let us divide buyer types into $n$ sets, $T_1, \cdots, T_n$, as follows. For buyer type $t$, let $i\in [n]$ be the item value for which $v_{i-1}< R(t)\le  v_{i}$. Set $i=1$ if $R(t)<v_1$ and set $i=n$ if $R(t)>v_n$. We place buyer type $t$ in set $T_i$. 
    
If there exist any buyer type $t$ with $R(t)>v_n$, it means that we are not allocating anything to this buyer at every possible buyer values. Then we can set $R(t)=v_n$ which the buyer still receives identical expected utilities after the change. The seller also receives identical expected revenue. Thus we can assume that all menu options have a refund no bigger than $v_n$. Now consider any set $T_i$, if there exists two buyer types $t, t'\in T_i$ with $t>t'$ and that $R(t)\neq R(t')$. We can always find buyer type $t$ such that $t,t+1\in T_i$ where $R(t)\neq R(t+1)$.  Note that $t$ and $t+1$ obtain identical allocations $z(\cdot, v)$. Applying our formula for prices derived above, we note that $p(t+1) = p(t)+(Q(t+1,t+1)-Q(t+1,t))=p(t)-G_{t+1}(v_{i-1})(R(t)-R(t+1))$, where $G_{t+1}(v_{i-1})=\sum_{i'<i} g_{t+1}(v_{i-1})$, and therefore, the buyer with type $t+1$ receive identical expected utilities from the two outcomes while the buyer from type $t$ receive higher utility using menu option $(p(t),R(t))$. And the seller receives identical expected revenue whether he adds a menu option of $(p(t+1),R(t+1))$ or not. Accordingly, we can simply drop the menu option corresponding to $t+1$ and assign to it the same menu option (same allocation, refund, and payment) corresponding to $t$. Buyer preferences and the seller's expected revenue remain the same as before. This implies the first part of the theorem. 
    
    For the second part, consider raising the refund corresponding to the set $T_i$ by some $\epsilon>0$ without affecting its allocation function, and let us recompute prices $p(t)$ corresponding to the new $Q$ functions. Then we note that the price for types in sets $T_{i+1}, \cdots, T_{n}$ remains the same as before. The price for any types t in set $T_i$ increases by precisely $G_{t_i^*}(v_{i-1})\epsilon$ where $t_i^*$ is the smallest type in set $T_i$. While those buyers' refund only increases by $G_t(v_{i-1})$, the seller's obtains a weakly higher expected revenue from buyers in set $T_i$. Now consider the smallest type in set $T_{i-1}$ and call it $y$. We note that the change in the refund for $T_i$ increases $Q(y,y-1)$ by exactly $G_y(v_{i-1})\epsilon$. Coupled with the increase of $G_{t_i^*}(v_{i-1})\epsilon$ in the price for type $y-1$ in set $T_i$, where $G_{t_i^*}(v_{i-1})\ge G_y(v_{i-1})$, this leads to a new price for $T_{i-1}$ that is no smaller than the previous price for this set. Higher types also see a corresponding increase in price. However, types in sets $T_{i-1}$ or lower see no increase in refunds. Consequently, the change leads to a weakly higher expected revenue for the seller.   

\end{proof}    

\subsection{A dynamic program to compute the optimal deterministic mechanism}\label{subsec:dp1}

The characterization in \Cref{thm:menusize_1} allows us to develop a dynamic program to find the optimal deterministic mechanism in discrete settings. Let $\{T_i\}$ denote the sets that buyer types are partitioned into, as in the proof of \Cref{thm:menusize_1}, and $t^*_i$ denote the lowest type in set $T_i$. We will define $n$ menu options $(p_i, R_i)$, each corresponding to the set $T_i$. Our dynamic program computes values $U(\ell, j)$ for all $\ell\in[m]$ and $j\in[n]$, which denotes the optimal revenue we can obtain through types $1,\cdots,\ell$ while placing type $\ell$ in the set $T_j$. This could be done by considering all the possibilities of the set type $\ell-1$ belongs to, and compute the corresponding contribution of revenue from type $\ell$ using the virtual value expression we have in \eqref{eq:vv-disc}. After figuring out the allocation of the optimal deterministic mechanism. We will give a menu option $(p_i,R_i)$ for each set $T_i$ where $i\in[n]$, the refund corresponding to it is given by $R_i=v_i$. Let $j$ denote the smallest index $>i$ for which the set $T_j$ is non-empty. Then, we can compute $p_i = p_j - \sum_{\ell=i}^{j-1}g_{t_i^*}(v_\ell)(v_j-v_\ell)$. The algorithm is formally stated in Algorithm~\ref{alg:DPdiscrete}.

\begin{algorithm*}[t]
	\SetAlgoNoLine
	\KwIn{$g_t(v)$ and $\phi(t,v)$ for all $t\in[m]$ and $v\in\{v_1,\cdots,v_n\}$}
\begin{itemize}
    \item Run the following dynamic programming to find $S(\ell)$(for $\ell\in[m]$) such that $z(\ell,v_j)=1$ iff $j \ge S(\ell)$:
    \begin{itemize}
        \item For $\ell = 1$ to $m$ and $j = 1$ to $n$ do:\\
        $U(\ell,j)=\sum_{j'=j}^n g_{\ell}(v_{j'}) \phi(\ell,v_{j'})+ \max_{\bar{j} \le j} U(\ell-1, \bar{j})$
        \item For $\ell = 1$ to $m$ do:\\
        $S(\ell)=\argmax_{j} U(\ell,j)$
    \end{itemize}
    \item For all $ i\in[n]$, let $R_i=v_i$ and $p_i = p_j - \sum_{\ell=i}^{j-1}g_{t_i^*}(v_\ell)(v_j-v_\ell)$ where $j$ is the smallest index $> i$ for which there exist some $\bar{\ell}\in[m]$ such that $S(\bar{\ell})=j$.
\end{itemize} 
\textbf{Output:} For all $ i\in[n]$, output menu option $(p_i,R_i)$

\caption{An DP for finding the optimal deterministic menu in the discrete setting}
\label{alg:DPdiscrete}
\end{algorithm*}

Note that we only compute values $U(\ell,j)$ for all $\ell\in[m]$ and $j\in[n]$ in our dynamic program, we obtain the following theorem.

\begin{theorem}\label{thm:DPdiscrete1}
    For the discrete setting with $m$ buyer types and $n$ buyer values, there exists a dynamic program that runs in time $O(mn^2)$ and returns the optimal deterministic refund mechanism. This mechanism offers at most $n$ menu options.
\end{theorem}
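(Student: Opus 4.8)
\textbf{Proof proposal for \Cref{thm:DPdiscrete1}.}
The plan is to verify two things: (i) correctness of the dynamic program, i.e.\ that Algorithm~\ref{alg:DPdiscrete} computes an allocation that maximizes expected virtual welfare among allocations implementable by a deterministic IC mechanism, and (ii) that it runs in time $O(mn^2)$ and produces a valid menu of at most $n$ options. For (ii) the count is immediate: we fill an $m\times n$ table $U(\ell,j)$, each entry requiring an $O(n)$-time inner sum $\sum_{j'=j}^n g_\ell(v_{j'})\phi(\ell,v_{j'})$ and an $O(n)$-time prefix-maximum $\max_{\bar j\le j}U(\ell-1,\bar j)$; the inner sums can be precomputed as suffix sums and the prefix maxima maintained incrementally, but even the naive bound is $O(mn^2)$. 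The menu has exactly $n$ options $(p_i,R_i)$ with $R_i=v_i$, and the price reconstruction step is a single pass, so the overall running time is $O(mn^2)$.

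For correctness, the key is to connect the DP to \Cref{thm:menusize_1}, \Cref{thm:det-ic}, and \Cref{thm:charac:wo:IC}. First I would invoke \Cref{thm:menusize_1} to restrict attention, without loss of revenue, to deterministic mechanisms whose $i$th menu option has refund exactly $v_i$; equivalently, by \Cref{thm:det-ic}, to $\{0,1\}$-allocations $z(\ell,\cdot)$ that are weakly increasing in the value index and such that the induced refund $R(\ell)=\min\{v_j: z(\ell,v_j)=1\}$ is weakly decreasing in $\ell$. An allocation of this form is fully described by a threshold function $S(\ell)\in[n]$ (with $S(\ell)=n+1$ meaning "never keep", which the DP can encode by allowing $j$ to range suitably, or by the convention that $R(\ell)=v_n$ as in the proof of \Cref{thm:menusize_1}), and the monotonicity of $R(\ell)$ in $\ell$ is exactly the constraint $S(\ell)\le S(\ell-1)$... wait, I need the direction right: higher type $\Rightarrow$ lower refund $\Rightarrow$ smaller threshold index, so $S(\ell)\le S(\ell-1)$. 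The DP's recursion $U(\ell,j)=\sum_{j'=j}^n g_\ell(v_{j'})\phi(\ell,v_{j'})+\max_{\bar j\le j}U(\ell-1,\bar j)$ enforces precisely this: placing type $\ell$ at threshold $j$ contributes its virtual welfare $\sum_{j'\ge j}g_\ell(v_{j'})\phi(\ell,v_{j'})$ (using $z(\ell,v_{j'})=\indic{j'\ge j}$ and \eqref{eq:vv-disc}), and the predecessor type $\ell-1$ may sit at any threshold $\bar j\le j$. By induction on $\ell$, $U(\ell,j)$ equals the maximum virtual welfare obtainable from types $1,\dots,\ell$ subject to $S(\ell)=j$ and the monotonicity chain, so $\max_j U(m,j)$ is the optimal virtual welfare over all admissible allocations. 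By \Cref{thm:charac:wo:IC} (with $p(0)=Q(0,0)$ so the correction term vanishes), this equals the optimal deterministic revenue, and recovering $S(\ell)=\argmax_j U(\ell,j)$ together with a consistent tie-breaking gives an optimal monotone allocation.

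Finally I would check that the reconstructed prices implement this allocation as an IC mechanism: the formula $p_i=p_j-\sum_{\ell=i}^{j-1}g_{t_i^*}(v_\ell)(v_j-v_\ell)$ (with $j$ the next nonempty-threshold index above $i$) is exactly the telescoped consecutive-type payment identity \eqref{eq:price-formula} / \eqref{eq:PI-p-disc} specialized to the grouping $\{T_i\}$ from \Cref{thm:menusize_1}, since within a group all types share the same allocation and hence the same price, and the cross-group difference $Q(t^*_i,t^*_i)-Q(t^*_i,v_j)$ with $R_i=v_i$, $R_j=v_j$ evaluates to $-\sum_{\ell=i}^{j-1}g_{t_i^*}(v_\ell)(v_j-v_\ell)$. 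By \Cref{thm:det-ic} this price rule together with the monotone refunds is IC and IR, so the output is a genuine deterministic refund mechanism achieving the optimal revenue with $n$ options. The main obstacle I anticipate is purely bookkeeping: correctly handling the "never keep" / $R(\ell)=v_n$ boundary case and the empty groups $T_i$ in the index $j$ (ensuring the price recursion terminates at the top group and that the monotonicity direction $S(\ell)\le S(\ell-1)$ matches the refund monotonicity of \Cref{thm:det-ic}); the substantive content is entirely inherited from the earlier characterization results.
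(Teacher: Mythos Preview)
Your plan is exactly the paper's: reduce via \Cref{thm:menusize_1} to refunds in $\{v_1,\dots,v_n\}$, encode any IC deterministic mechanism by a monotone threshold $S:[m]\to[n]$ using \Cref{thm:det-ic}, turn revenue into expected virtual welfare via \Cref{thm:charac:wo:IC}, and then solve the resulting chain-constrained maximization with the $m\times n$ table $U(\ell,j)$; the $O(mn^2)$ count and the $\le n$-option menu bound follow directly.

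The one point that needs cleaning up is precisely the one you flag as ``bookkeeping'' at the end. You correctly derive that higher type $\Rightarrow$ lower refund $\Rightarrow$ smaller threshold index, i.e.\ $S(\ell)\le S(\ell-1)$. But the recursion you then transcribe, $U(\ell,j)=\cdots+\max_{\bar j\le j}U(\ell-1,\bar j)$, enforces $S(\ell-1)=\bar j\le j=S(\ell)$, the \emph{opposite} inequality; as written it optimizes over allocations that are not IC by \Cref{thm:det-ic}. (The pseudocode of Algorithm~\ref{alg:DPdiscrete} carries the same slip.) In your write-up, flip the inner maximization to $\bar j\ge j$ (or, equivalently, process types from $m$ down to $1$), and then verify that the price-reconstruction formula you quote is consistent with that orientation of the chain. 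Once that direction is pinned down, your argument is complete and coincides with the paper's.
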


\subsection{Efficient computation of optimal menus with small size}\label{subsec:dp4}

After settling the computation of the optimal menu, we note that in some circumstances the menu size is quite large ($n$), while we ask if we can still obtain the same revenue using menus with smaller sizes. To answer this question, utilizing similar ideas of dynamic programming, we provide a way to compute the optimal menu with any certain size.

We use $c$ to denote the size of our small menu, where $c<n$.

Since we only have $c$ possible menu options, in contrast to Section~\ref{subsec:dp1}, we can only partition the buyers into $c$ sets, where the buyers inside the same set obtain identical allocation. Let $\{T_1,T_2,\cdots,T_c\}$ denote the sets that buyer types are partitioned into. We not only need to decide how we partitioning different types of buyers, we also need to decide where we start to allocating the item within each set. To be more specific, for $q\in [c]$, let $t_q^*$ denote the lowest type in set $T_q$. Let $i_q$ denote the parameter where $z(t_q^*,v_{i})=0$ if and only if $i<i_q$. This brings extra computation difficulty for our dynamic program, where we have to compute values $U(\ell, j,i_j)$ for all $\ell\in[m]$, $j\in[c]$ and $i_j\in[n]$, which denotes the optimal revenue we can obtain through types $1,\cdots,\ell$ while placing type $\ell$ in the set $T_j$ and making $i_j$ the parameter for set $T_j$.  We will give a menu option $(p_q,T_q)$ for each set $T_q$ where $q\in[c]$, the refund corresponding to it is given by $R_q=v_{i_q}$. Let $j$ denote the smallest index $>q$ for which the set $T_j$ is non-empty. Then, we can compute $p_q = p_j - \sum_{\ell=i_q}^{i_j-1}g_{t_q^*}(v_\ell)(v_{i_j}-v_\ell)$. The algorithm is formally stated in Algorithm~\ref{alg:DPdiscreteSmall}.
\begin{algorithm*}[H]
	\SetAlgoNoLine
	\KwIn{$g_t(v)$ and $\phi(t,v)$ for all $t\in[m]$ and $v\in\{v_1,\cdots,v_n\}$}
\begin{itemize}
    \item Run the following dynamic programming to find $S(\ell)$(for $\ell\in[m]$) and $i_q$(for $q\in[c]$) such that $z(\ell,v_j)=1$ iff $j \ge i_{S(\ell)}$:
    \begin{itemize}
        \item For $\ell = 1$ to $m$, $j = 1$ to $c$ and $x=1$ to $n$ do:\\
        $U(\ell,j,x)=\sum_{x'=x}^n g_{\ell}(v_{x'}) \phi(\ell,v_{x'})+ \max_{\bar{j} \le j, \bar{x} \ge x} U(\ell-1, \bar{j},\bar{x})$
        \item For $\ell = 1$ to $m$ do:\\
        $S(\ell)=\argmax_{j} U(\ell,j,x)$
        \item For $j=1$ to $c$ do: \\
        If there exist some $\bar{\ell}$ such that $S(\bar{\ell})=j$, then $i_j=\argmax_{x}U(\bar{\ell},j,x)$.
    \end{itemize}
    \item For all $ q\in[c]$, let $R_q=v_{i_q}$ and $p_q = p_j - \sum_{\ell=i_q}^{i_j-1}g_{t_q^*}(v_\ell)(v_{i_j}-v_\ell)$ where $j$ is the smallest index $> q$ for which there exist some $\bar{\ell}\in[m]$ such that $S(\bar{\ell})=q$.
\end{itemize} 
\textbf{Output:}  For all $q\in[c]$, output menu option $(p_q,R_q)$.
\caption{An DP for finding the optimal small-sized deterministic menu in the discrete setting}
\label{alg:DPdiscreteSmall}
\end{algorithm*}

Note that we only compute values $U(\ell, j,x)$ for all $\ell\in[m]$, $j\in[c]$ and $x\in[n]$. We obtain the following theorem.

\begin{theorem}
    For the discrete setting with $m$ buyer types and $n$ buyer values, there exists a dynamic program that runs in time $O(mn^2c^2)$ and returns the optimal deterministic refund mechanism with menu size $\le c$.
\end{theorem}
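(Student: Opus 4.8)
The plan is to mirror the analysis behind \Cref{thm:DPdiscrete1}, adding one coordinate to the dynamic program to track how many menu options have been used so far. The first step is to reduce the mechanism design problem to a clean combinatorial optimization. By \Cref{thm:det-ic} together with \Cref{thm:charac:wo:IC} (normalizing $p(0)=Q(0,0)$), a revenue-optimal deterministic IC mechanism corresponds to a $\{0,1\}$-allocation $z(t,v)$ that is weakly increasing in $t$ and in $v$ and maximizes the expected virtual welfare $\tfrac1m\sum_{i\in[m]}\sum_{j\in[n]}g_{t_i}(v_j)\,\phi(t_i,v_j)\,z(t_i,v_j)$ with $\phi$ from \eqref{eq:vv-disc}; the refunds and prices are then recovered from the threshold form of $z$ via $R(t)=\min\{v:z(t,v)=1\}$ and the telescoping payment identity as in \eqref{eq:price-formula}. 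Since $z(t,\cdot)$ is a threshold function, I would encode it by a threshold index $\theta(i)\in\{1,\dots,n+1\}$ (so $z(t_i,v_j)=\indic{j\ge\theta(i)}$, with $\theta(i)=n+1$ meaning the item is never kept, i.e.\ refund $v_{\max}$), so monotonicity in $t$ becomes $\theta(1)\ge\cdots\ge\theta(m)$ and the level sets of $\theta$ are consecutive blocks $T_1,\dots,T_{c'}$ of types. Using the merging argument from the proof of \Cref{thm:menusize_1} --- types sharing a refund can be assigned a single menu option --- the constraint ``menu size $\le c$'' becomes ``$c'\le c$'', and an optimal solution can be taken to have strictly decreasing block thresholds. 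So the target is to maximize $\sum_{q}\sum_{t_i\in T_q}\sum_{j\ge i_q}g_{t_i}(v_j)\phi(t_i,v_j)$ over partitions of $[m]$ into at most $c$ consecutive blocks $T_q$ with non-increasing thresholds $i_q$.

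Next I would verify that Algorithm~\ref{alg:DPdiscreteSmall} solves this. Scanning types in increasing order, the state $U(\ell,j,x)$ should equal the best total virtual welfare over types $1,\dots,\ell$ with type $\ell$ placed in block $T_j$ and $T_j$ having threshold $x$; type $\ell$'s own contribution is the suffix sum $\sum_{x'\ge x}g_{t_\ell}(v_{x'})\phi(t_\ell,v_{x'})$, and the recursion maximizes over predecessors $U(\ell-1,\bar j,\bar x)$ with $\bar j\le j$ (blocks appear in order, at most $c$ of them since $j\le c$) and $\bar x\ge x$ (so $\theta$ is non-increasing, equivalently $z$ is monotone in $t$; when $\bar j=j$ the binding predecessor has $\bar x=x$). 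Correctness is then a straightforward two-way induction on $\ell$: every feasible block/threshold assignment to types $1,\dots,\ell$ is realized by a chain of admissible transitions of the same value, and conversely every such chain is feasible. The optimum is $\max_{j,x}U(m,j,x)$; backtracking recovers the blocks $T_q$ and thresholds $i_q$, after which the menu is read off with $R_q=v_{i_q}$ and $p_q$ from the telescoping identity stated in the algorithm, and \Cref{thm:det-ic} certifies this menu is IC (and the normalization makes it IR) with revenue equal to the computed virtual welfare --- hence optimal among deterministic mechanisms of menu size $\le c$.

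For the running time: there are $O(mcn)$ states; for each fixed $\ell$ the suffix sums over $x$ are computed once in $O(n)$ total time; and evaluating $\max_{\bar j\le j,\bar x\ge x}U(\ell-1,\bar j,\bar x)$ naively costs $O(cn)$ per state. This gives $O(mcn\cdot cn)=O(mn^2c^2)$, with the $O(n+c)$ menu-recovery step lower order, matching the claim. (Prefix-max tables would shave factors but are unnecessary.)

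I expect the main obstacle to be the bookkeeping in the reduction rather than the dynamic program itself: carefully identifying ``at most $c$ menu options'' with ``at most $c$ distinct thresholds'' (invoking the same-option merging of \Cref{thm:menusize_1}), handling the degenerate ``never keep'' threshold $n+1\leftrightarrow v_{\max}$ consistently at the boundary, and checking that the relaxed transition $\bar x\ge x$ (rather than forcing $\bar x=x$ whenever $\bar j=j$) cannot let the algorithm undercount blocks and output a menu with more than $c$ options --- which is handled by restricting attention to optimal solutions whose consecutive non-empty blocks have strictly decreasing thresholds, so that merging equal-threshold blocks preserves the count $\le c$. Everything else is routine.
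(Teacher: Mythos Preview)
Your high-level plan mirrors the paper's: reduce to a monotone-threshold optimization via the virtual-value characterization, run the three-index DP of Algorithm~\ref{alg:DPdiscreteSmall}, and count states. That skeleton is right, and your reduction of ``menu size $\le c$'' to ``at most $c$ distinct thresholds'' via the merging argument of \Cref{thm:menusize_1} is sound.

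The gap is in your treatment of the recursion itself. You assert that ``when $\bar j=j$ the binding predecessor has $\bar x=x$'', and later that any overcounting of blocks can be repaired by merging equal-threshold blocks. Neither holds. The transition $(\bar j,\bar x)\to(j,x)$ with $\bar j=j$ and $\bar x>x$ lets the threshold strictly decrease \emph{without advancing the block counter}, so a single block index can absorb arbitrarily many distinct thresholds; the resulting backtracked solution has pairwise distinct thresholds and your merging argument (which only collapses blocks that share a threshold) does nothing. Concretely, an easy induction shows that with the recursion as written $U(\ell,j,x)$ is independent of $j$: if $U(\ell-1,\cdot,\bar x)=V(\ell-1,\bar x)$ then $\max_{\bar j\le j,\bar x\ge x}U(\ell-1,\bar j,\bar x)=\max_{\bar x\ge x}V(\ell-1,\bar x)$, so $U(\ell,j,x)$ depends only on $(\ell,x)$. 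Hence $\max_{j,x}U(m,j,x)$ returns the \emph{unconstrained} optimum over all non-increasing threshold sequences, not the best menu of size $\le c$; for instance with $c=1$ the DP happily outputs a three-threshold solution whenever that beats the best single price.

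The fix is what you half-gesture at: force ``same block $\Rightarrow$ same threshold'' by replacing the inner maximization with $\max\bigl(U(\ell-1,j,x),\ \max_{\bar j<j,\ \bar x\ge x}U(\ell-1,\bar j,\bar x)\bigr)$ (optionally $\bar x>x$ to make block thresholds strictly decreasing). Now $j$ is an honest count of distinct thresholds, your correctness induction goes through verbatim, and the per-state work is still $O(cn)$, preserving the $O(mn^2c^2)$ bound. The paper's pseudocode and accompanying text have the same looseness, so a complete proof must patch the recursion rather than merely ``verify'' it as stated.
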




\section{Discrete Setting with ordered item types}\label{sec:discrete_ordered}

Now we discuss another model of the discrete setting, where we place a further assumption. As in the previous setting, $m,n<\infty$. Let $S$ denote the set of item types, with $k:=|S|$. We assume that item types are ordered in that for {\em every} buyer type $t$ and every pair of item types $s>s'\in S$, we have $v(t,s)\ge v(t,s')$. The item types can be also viewed as the quality of the item, where naturally buyers hold larger value towards item with higher quality.

This setting is algorithmically interesting when $k\ll m,n$. For ease of notation, we rename item types so that $S=[k]$.

\subsection{Menu size}

We first claim that in this setting, we can employ a menu with significantly smaller size, where any deterministic mechanism need only have at most $k+1$ distinct options on its menu.

\begin{lemma}\label{thm:discrete}
    For the discrete setting with ordered item types with $|S|=k$, any deterministic refund mechanism can be implemented with at most $k+1$ menu options. Furthermore, we may assume without any loss in the mechanism's revenue, that the refund corresponding to the $0$th option is $v_{\min}$; that corresponding to the $i$th option for $i\le k$ is equal to the value $v(t,i)$ of the smallest buyer type $t$ that purchases this option; and that corresponding to option $k+1$ is $v_{\max}$. 
\end{lemma}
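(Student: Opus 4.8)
The plan is to mirror the proof of \Cref{thm:menusize_1}, changing only the rule used to group buyer types. Fix any deterministic IC mechanism; by \Cref{thm:det-ic} it is given by a weakly decreasing refund function $R(\cdot)$ together with the consecutive payment identity, and its allocation is $z(t,v)=\indic{v\ge R(t)}$. Since a type-$t$ buyer realizes value $v(t,s)$ on a quality-$s$ item and $v(t,\cdot)$ is weakly increasing, she keeps exactly the qualities $s\ge\sigma(t)$, where $\sigma(t):=\min\{s\in[k]:v(t,s)\ge R(t)\}$, with conventions $\sigma(t):=k+1$ when $R(t)>v(t,k)$ and $v(t,0):=-\infty$, $v(t,k+1):=+\infty$. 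Using that $R(\cdot)$ is weakly decreasing and, by \Cref{assump:valuefunction}, $v(t,s)\ge v(t',s)$ whenever $t>t'$, we get $v(t,\sigma(t'))\ge v(t',\sigma(t'))\ge R(t')\ge R(t)$ for $t>t'$, hence $\sigma(t)\le\sigma(t')$: the threshold $\sigma(\cdot)$ is weakly decreasing in $t$ and takes at most $k+1$ values, so it partitions $T$ into at most $k+1$ contiguous groups $T_1,\dots,T_{k+1}$ with $T_i=\{t:\sigma(t)=i\}$. The goal is to realize each nonempty group by a single menu option carrying the claimed refund.

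As in \Cref{thm:menusize_1}, I would first record that for deterministic mechanisms $Q(t,t')=\mathbb{E}_{v\sim G_t}[\max(v,R(t'))]$, that $Q(t,y)-Q(t,y-1)$ is weakly increasing in $t$ (its integrand $\max(v,R(y))-\max(v,R(y-1))$ is nondecreasing in $v$ because $R(y)\le R(y-1)$, and \Cref{assump:valuefunction} supplies the stochastic dominance), and hence that the revenue-maximizing IC prices compatible with a fixed $R(\cdot)$ are those produced by the consecutive constraints, $p(t)=p(1)+\sum_{y=2}^{t}\{Q(y,y)-Q(y,y-1)\}$. Revenue can therefore be analyzed as a function of $R(\cdot)$ alone.

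Next, as in the first half of \Cref{thm:menusize_1}'s proof, I would merge types inside a group. If two adjacent types $t,t+1\in T_i$ have $R(t)\ne R(t+1)$, then both $R(t)$ and $R(t+1)$ lie in $(v(t+1,i-1),v(t+1,i)]$ -- indeed $R(t+1)\le R(t)\le v(t,i)\le v(t+1,i)$ and $R(t)\ge R(t+1)>v(t+1,i-1)$ -- so type $t+1$ keeps exactly qualities $\ge i$ under either of the two options; replacing $(p(t+1),R(t+1))$ by $(p(t),R(t))$ as type $t+1$'s menu option leaves type $t+1$ indifferent and the seller's revenue from type $t+1$ unchanged, by exactly the computation in \Cref{thm:menusize_1} with $G_{t+1}(v_{i-1})$ replaced by $\Pr_{v\sim G_{t+1}}[\text{quality}<i]$. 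Iterating collapses each group to a single refund, yielding at most $k+1$ distinct menu options.

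Finally I would normalize the refunds. For the group $T_1$ of buyers who keep every quality set the refund to $v_{\min}$, and for the group $T_{k+1}$ of buyers who keep nothing set it to $v_{\max}$: these buyers never return (resp.\ always return) irrespective of their refund, and a short check from the price formula shows these changes are revenue-neutral, giving option $0$ and option $k+1$. For an intermediate group $T_i$ with $2\le i\le k$, let $t^*_i:=\min T_i$ and set the refund to $v(t^*_i,i)$; every type $t\in T_i$ then still keeps exactly qualities $\ge i$, because $v(t,i-1)<R(t)\le R(t^*_i)\le v(t^*_i,i)\le v(t,i)$. Since this change only raises refunds, the revenue accounting is identical in form to the second half of \Cref{thm:menusize_1}: each type in $T_i$ sees its price rise by at least the increase in its expected refund payout (using that $t^*_i$ is the smallest type in $T_i$ together with the within-group stochastic dominance), and the resulting increase in $Q(\cdot,\cdot)$ at the boundary with $T_{i-1}$ raises the prices of all lower types by at least as much as it raises their payouts, again by stochastic dominance. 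I expect this last step to be the main obstacle: verifying that simultaneously raising a group's refunds to its canonical value and re-deriving all prices from the consecutive payment identity cannot decrease the seller's revenue, including the knock-on effect on the prices of lower types. This is precisely where \Cref{assump:valuefunction} is used essentially; empty groups and the two extreme groups only make the argument easier.
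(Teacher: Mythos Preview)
Your proposal is correct and follows essentially the same route as the paper's own proof: group buyer types by their quality threshold, collapse each group to a single menu option via the consecutive-price identity (exactly as in the proof of \Cref{thm:menusize_1}), then push each group's refund up to its canonical value and verify that revenue weakly increases. One small caveat worth making explicit: the pointwise inequality $v(t,s)\ge v(t',s)$ for $t>t'$ that you invoke is not literally the content of \Cref{assump:valuefunction}, but it does follow from it in the ordered-item-types model (first-order stochastic dominance between two push-forwards of the \emph{same} item-type law through weakly increasing maps forces the maps to be pointwise ordered on the support), and the paper's argument relies on this same fact just as implicitly as yours does.
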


The main idea of the proof of \Cref{thm:discrete} is similar to the proof of \Cref{thm:menusize_1}. In contrast of dividing the range of refund using $n$ values, we divide the range of refund using $k$ item types. Now within each group, different types of buyers receive identical allocation according to the item types, enabling us to combine all the menu options for all types of buyers in this group into a single menu choice. Below are the details of our proof.

\begin{proof}
Similarly, we can derive the same expression for the pricing in this case as a function of the deterministic refunds $R(t)$ defined previously. 

Fixing $R(t)$ for all $t$, applying the same argument, we can obtain the best choice for the payments $p(t)$. In particular, for any fixed choice of $p(1)$, define $p(t)$ for $t>1$ as:
\begin{align}
    p(t)=p(1)+\sum_{y=2}^{t} \{Q(y,y)-Q(y,y-1)\} \label{eq:price-formula}
\end{align}
Again we are able to claim and prove that this choice of $p(t)$ satisfies all of the IC constraints above using the same arguments. 

We are now ready to prove the menu size theorem. The following proofs are similar to the previous proof of \Cref{thm:menusize_1}. Consider any deterministic IC mechanism $(\{p(t), R(t)\})$ with payments $p(t)$ as described above. Let us divide buyer types into $k+1$ sets, $T_0, \cdots, T_k$, as follows. For buyer type $t$, let $i\in [k]$ be the item quality for which $v(t,i)\le R(t)<  v(t,i+1)$. Set $i=0$ if $R(t)>v(t,1)$. We place buyer type $t$ in set $T_i$. 
    
Now consider any two buyer types $t, t'\in T_i$ with $t>t'$ and suppose that $R(t)\neq R(t')$. Note that $t$ and $t'$ obtain identical allocations $z(\cdot, s)$. Applying our formula for prices derived above, we note that $p(t) = p(t')-G(i)(R(t')-R(t))$, where $G(i)=\sum_{i'<i} g(i)$, and therefore, the two buyer types receive identical expected utilities from the two outcomes and the seller receives identical expected revenue from the two outcomes. Accordingly, we can simply drop the menu option corresponding to $t'$ and assign to it the same menu option (same allocation, refund, and payment) corresponding to $t$. Buyer preferences and the seller's expected revenue remain the same as before. This implies the first part of the theorem. 
    
    For the second part, consider raising the refund corresponding to the set $T_i$ by some $\epsilon>0$ without affecting its allocation function, and let us recompute prices $p(t)$ corresponding to the new $Q$ functions. Then we note that the price for types in sets $T_1, \cdots, T_{i-1}$ remains the same as before. The price for types in set $T_i$ increases by precisely $G(i)\epsilon$, countering the respective increase in refunds and thereby causing no effect on the seller's expected revenue. Now consider the smallest type in set $T_{i+1}$ and call it $y$. We note that the change in the refund for $T_i$ increases $Q(y,y-1)$ by at most $G(i')\epsilon$ for some $i'\le i$. Coupled with the increase of $G(i)\epsilon$ in the price for set $T_i$, where $G(i)\ge G(i')$, this leads to a new price for $T_{i+1}$ that is no smaller than the previous price for this set. Higher types also see a corresponding increase in price. However, types in sets $T_{i+1}$ or higher see no increase in refunds. Consequently, the change leads to a weakly higher expected revenue for the seller.   

\end{proof}

\subsection{A dynamic program to compute the optimal deterministic mechanism}\label{subsec:dp3}

Again, the characterization in \Cref{thm:discrete} allows us to develop a dynamic program to find the optimal deterministic mechanism in this setting. We employ similar ideas as \cref{subsec:dp1}, we again let $\{T_i\}$ denote the sets that buyer types are partitioned into, as in the proof of \Cref{thm:discrete}, and $t^*_i$ denote the lowest type in set $T_i$. We will now define $k+1$ menu options $(p_i, R_i)$, each corresponding to the set $T_i$. Still, our dynamic program computes values $U(\ell, j)$ for all $\ell\in[m]$ and $j\in[k+1]$, which denotes the optimal revenue we can obtain through types $1,\cdots,\ell$ while placing type $\ell$ in the set $T_j$. This will be done in a similar way where we consider all the possibilities of placement and compute its contribution to our revenue using virtual values. After figuring out the optimal allocation using the dynamic program, we employ similar ideas to compute the price $p_i$ and refund $R_i$ for each set $T_i$ as one menu option. The formal algorithm is stated in Algorithm~\ref{alg:DPdiscrete_ordered}.

\begin{algorithm*}[H]
	\SetAlgoNoLine
	\KwIn{$g_t(v)$ and $\phi(t,v)$ for all $t\in[m]$ and $v\in\{v_1,\cdots,v_n\}$}
\begin{itemize}
    \item For $i=k+1$ do:\\
    $p_{k+1}=R_{k+1}=v_{\max}$
    \item Run the following dynamic programming to find $S(\ell)$(for $\ell\in[m]$) such that $z(\ell,v)=1$ iff $v\ge v(\ell,S(\ell))$:
    \begin{itemize}
        \item For $\ell = 1$ to $m$ and $j = 1$ to $k$ do:\\
        $U(\ell,j)=\sum_{j'=j}^k g_{\ell}(v(\ell,j')) \phi(\ell,v(\ell,j'))+ \max_{\bar{j} \le j} U(\ell-1, \bar{j})$
        \item For $\ell = 1$ to $m$ do:\\
        $S(\ell)=\argmax_{j} U(\ell,j)$
    \end{itemize}
    \item For all $ i\in[k]$, assume $t^*_i$ denote the lowest type in set $T_i$. Let $R_i=v(t_i^*,i)$ and $p_i = p_j - \sum_{s=1}^{i}g(s)(v(t_j^*,j)-\max(v(t_i^*,s),v(t_i^*,i)))$ where $j$ is the smallest index $> i$ for which there exist some $\bar{\ell}\in[m]$ such that $S(\bar{\ell})=j$.
\end{itemize} 
\textbf{Output:} For all $ i\in[k+1]$, output menu option $(p_i,R_i)$

\caption{An DP for finding the optimal deterministic menu in the discrete setting with ordered item types}
\label{alg:DPdiscrete_ordered}
\end{algorithm*}

Note that we only compute values $U(\ell,j)$ for all $\ell\in[m]$ and $j\in[k]$ in our dynamic program, we obtain the following theorem.

\begin{theorem}
    For the discrete setting with $m$ buyer types and $k$ ordered item types, there exists a dynamic program that runs in time $O(mk^2)$ and returns the optimal deterministic refund mechanism. This mechanism offers at most $k+1$ menu options.
\end{theorem}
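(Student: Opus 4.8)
The plan is to mirror the proof of \Cref{thm:DPdiscrete1}, using \Cref{thm:discrete} in place of \Cref{thm:menusize_1}. The first step is to reduce the search for an optimal deterministic mechanism to a purely combinatorial problem: optimizing a separable objective over monotone partitions of the $m$ buyer types into at most $k+1$ blocks. By \Cref{thm:det-ic}, an IC deterministic mechanism has a $\{0,1\}$ allocation $z(t,v)$ weakly increasing in both arguments, equivalently a weakly decreasing refund $R(t)$; since the realized value of type $t$ at item quality $s$ is $v(t,s)$, weakly increasing in $t$ and in $s$, the rule ``keep iff $v\ge R(t)$'' is equivalent to a threshold-in-quality rule ``type $t$ keeps iff $s\ge\sigma(t)$'' for some $\sigma:[m]\to\{1,\dots,k+1\}$, and the monotonicity of $R(\cdot)$ together with that of $v(\cdot,\cdot)$ forces $\sigma$ to be monotone in $t$. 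Conversely, by \Cref{thm:discrete} every such allocation is realized by a menu of at most $k+1$ options with the refunds normalized as in its statement, and by the payment identity \eqref{eq:PI-p-disc} together with \Cref{thm:det-ic} the prices are then pinned down and the resulting mechanism is IC. So the allocations attainable by optimal deterministic mechanisms are exactly those given by monotone maps $\sigma$.

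The second step is to write the revenue as a separable objective and optimize it by dynamic programming. By \Cref{thm:charac:wo:IC} and the normalization $p(0)=Q(0,0)$, the revenue of the mechanism induced by $\sigma$ equals its virtual welfare $\mathbb{E}_{t,v}[z(t,v)\phi(t,v)]$, which regroups (using that $z$ and $\phi$ depend on $t$ and the value only) into $\sum_{\ell=1}^{m} c_\ell(\sigma(\ell))$, where $c_\ell(j):=\sum_{s\ge j}\Pr[\text{quality}=s]\,\phi(\ell,v(\ell,s))$ is exactly the inner sum in Algorithm~\ref{alg:DPdiscrete_ordered}, computed from \eqref{eq:vv-disc}. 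Maximizing $\sum_\ell c_\ell(\sigma(\ell))$ over monotone $\sigma$ is a one-dimensional DP: $U(\ell,j)$ is the best value of $\sum_{\ell'\le\ell}c_{\ell'}(\sigma(\ell'))$ subject to $\sigma(\ell)=j$ and $\sigma$ monotone on $\{1,\dots,\ell\}$, with recurrence $U(\ell,j)=c_\ell(j)+\max_{\bar j}U(\ell-1,\bar j)$ over the $\bar j$ compatible with monotonicity, and optimum $\max_j U(m,j)$. I would then check that $S(\ell)=\argmax_j U(\ell,j)$ recovers an optimal $\sigma$ and that the closing lines of the algorithm assign refunds matching the normalization of \Cref{thm:discrete} and prices matching \eqref{eq:PI-p-disc}, so the output is a genuine IC deterministic mechanism with at most $k+1$ distinct options whose revenue equals the optimal virtual welfare, hence is optimal. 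For the running time: there are $O(mk)$ states; precomputing all $c_\ell(\cdot)$ by a suffix sum over the $k$ qualities and evaluating each transition as a max over $O(k)$ predecessors gives $O(mk^2)$ total, with the reconstruction costing a further $O(mk)$.

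I expect the reduction step to be the only genuinely delicate part: one must pin down the exact correspondence between IC deterministic mechanisms (after the refund normalization of \Cref{thm:discrete}) and monotone threshold maps $\sigma$, being careful about which direction $\sigma$ is monotone in $t$ (this fixes whether the DP maximizes over $\bar j\le j$ or $\bar j\ge j$) and about boundary ties in the ``keep iff $v\ge R(t)$'' rule, and one must verify that replacing the refunds by their normalized values leaves both the allocation $z$ and the virtual welfare unchanged. Once that correspondence and the virtual-welfare identity are established, correctness and the $O(mk^2)$ bound follow exactly as in \Cref{thm:DPdiscrete1}, now with $k$ item qualities playing the role of the $n$ values there.
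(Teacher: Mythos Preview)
Your proposal is correct and follows essentially the same approach as the paper: invoke \Cref{thm:discrete} to reduce to threshold-in-quality allocations, express revenue as virtual welfare via \Cref{thm:charac:wo:IC}, and run the same $U(\ell,j)$ dynamic program over $(\text{type},\text{quality threshold})$ states with the recurrence of Algorithm~\ref{alg:DPdiscrete_ordered}. Your explicit attention to pinning down the monotonicity direction of $\sigma$ (and hence whether the DP maximizes over $\bar j\le j$ or $\bar j\ge j$) is a useful sanity check that the paper glosses over.
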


\subsection{Efficient computation of optimal menus with small size}
We now turn our attention again to the setting where we are restricted to use menu size no bigger than $c$ where $c\le k$. Since we only have $c$ possible menu options, we still can only partition the buyers into $c$ sets, where the buyers inside the same set obtain identical allocation(with respect to the item types). Let $\{T_1,T_2,\cdots,T_c\}$ denote the sets that buyer types are partitioned into. Utilizing similar ideas in Section~\ref{subsec:dp4}, now we need to decide how to partition different types of buyers and the item types where we start to allocating the item within each set. For $q\in [c]$, let $t_q^*$ denote the lowest type in set $T_q$. This time we need to decide parameter $y_q$, which denote the parameter where $z(t_q^*,v(t_q^*,y))=0$ if and only if $y<y_q$. Our dynamic program will compute values $U(\ell, j, y_j)$ for all $\ell\in[m]$, $j\in[c]$ and $y_j\in[k]$, which denotes the optimal revenue we can obtain through types $1,\cdots,\ell$ while placing type $\ell$ in the set $T_j$ and making $y_j$ the parameter for set $T_j$.  We will give a menu option $(p_q,T_q)$ for each set $T_q$ where $q\in[c]$, where similar ideas before will be used to calculate $p_q$ and $R_q$. The algorithm is formally stated in Algorithm~\ref{alg:DPdiscreteSmall2}.

\begin{algorithm*}[H]
	\SetAlgoNoLine
	\KwIn{$g_t(v)$ and $\phi(t,v)$ for all $t\in[m]$ and $v\in\{v_1,\cdots,v_n\}$}
\begin{itemize}
    \item Run the following dynamic programming to find $S(\ell)$(for $\ell\in[m]$) and $y_q$(for $q\in[c]$) such that $z(\ell,v)=1$ iff $v \ge v(\ell,y_{S(\ell)})$:
    \begin{itemize}
        \item For $\ell = 1$ to $m$, $j = 1$ to $c$ and $x=1$ to $k$ do:\\
        $U(\ell,j,x)=\sum_{x'=x}^k g_{\ell}(v(\ell,x') \phi(\ell,v(\ell,x'))+ \max_{\bar{j} \le j, \bar{x} \ge x} U(\ell-1, \bar{j},\bar{x})$
        \item For $\ell = 1$ to $m$ do:\\
        $S(\ell)=\argmax_{j} U(\ell,j,x)$
        \item For $j=1$ to $c$ do: \\
        If there exist some $\bar{\ell}$ such that $S(\bar{\ell})=j$, then $y_j=\argmax_{x}U(\bar{\ell},j,x)$.
    \end{itemize}
    \item For all $ q\in[c]$, assume $t^*_q$ denote the lowest type in set $T_q$. Let $R_q=v(t_q^*,s_q)$ and $p_q = p_j - \sum_{s=1}^{s_q}g(s)(v(t_j^*,s_j)-\max(v(t_q^*,s),v(t_q^*,s_q)))$ where $j$ is the smallest index $> i$ for which there exist some $\bar{\ell}\in[m]$ such that $S(\bar{\ell})=j$.
\end{itemize} 
\textbf{Output:}  For all $q\in[c]$, output menu option $(p_q,R_q)$.
\caption{An DP for finding the optimal small-sized deterministic menu in the discrete setting with ordered item types}
\label{alg:DPdiscreteSmall2}
\end{algorithm*}

Note that we only compute values $U(\ell, j,x)$ for all $\ell\in[m]$, $j\in[c]$ and $x\in[k]$. We obtain the following theorem.

\begin{theorem}
    For the discrete setting with $m$ buyer types and $k$ ordered item types, there exists a dynamic program that runs in time $O(mk^2c^2)$ and returns the optimal deterministic refund mechanism with menu size $\le c$.
\end{theorem}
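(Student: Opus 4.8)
The plan is to mirror the development of Section~\ref{subsec:dp3} — which handled unbounded menu size — but to carry one extra coordinate in the dynamic program so that we additionally choose, for each of the at most $c$ blocks, the item quality at which allocation begins. Concretely, I would first invoke \Cref{thm:det-ic} and \Cref{thm:discrete}: by \Cref{thm:det-ic} every IC deterministic mechanism is equivalent to a $\{0,1\}$ allocation $z(t,v)$ that is weakly increasing in $t$ and $v$; because item types are ordered, this means each buyer type $t$ keeps the item exactly when the realized quality is at least some threshold $S(t)\in[k]$, with $S(\cdot)$ monotone in $t$ and the price pinned down by the consecutive-IC identity \eqref{eq:PI-p-disc}. \Cref{thm:discrete} then says that restricting to menus of size $\le c$ (with $c\le k$) is tantamount to partitioning the $m$ types into at most $c$ consecutive blocks $T_1,\dots,T_c$, all types in a block sharing the same threshold $y_q$, with the thresholds monotone across blocks and the refund of block $T_q$ normalizable to $v(t_q^*,y_q)$ where $t_q^*$ is the block's lowest type. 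Finally, by \Cref{thm:charac:wo:IC} and the normalization that the lowest type gets zero utility, the seller's revenue equals the virtual welfare $\mathbb{E}_{t,v}[z(t,v)\phi(t,v)]$ with $\phi$ as in \eqref{eq:vv-disc}; so the problem reduces to a purely combinatorial search for the block boundaries and the per-block thresholds maximizing a sum of virtual values.

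Next I would set up the dynamic program of Algorithm~\ref{alg:DPdiscreteSmall2}. Its state is a triple $(\ell,j,x)$ recording that buyer type $\ell$ has been placed in block $T_j$ and that block $T_j$ uses quality threshold $x$, and $U(\ell,j,x)$ stores the maximum virtual welfare extractable from types $1,\dots,\ell$ subject to these choices. The recurrence adds type $\ell$'s own contribution $\sum_{x'\ge x} g_\ell\bigl(v(\ell,x')\bigr)\,\phi\bigl(\ell,v(\ell,x')\bigr)$ — the virtual welfare collected from $\ell$ when it keeps the item precisely at qualities $\ge x$ — and then maximizes over predecessor states $(\ell-1,\bar j,\bar x)$ whose block index and threshold are consistent with $(j,x)$ in the direction dictated by \Cref{thm:det-ic} (the blocks are consecutive and the thresholds are monotone, which is exactly the pair of constraints $\bar j\le j$ and $\bar x\ge x$ in the algorithm). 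The optimum is $\max_{j,x}U(m,j,x)$; backtracking recovers $S(\ell)$ and the thresholds $y_q$, after which the menu is read off by setting $R_q=v(t_q^*,y_q)$ and backing the prices out of the telescoped form of \eqref{eq:PI-p-disc} — which the proof of \Cref{thm:discrete} already certifies is IC and revenue-optimal among deterministic mechanisms with this block structure — giving the closed form $p_q = p_j - \sum_{s=1}^{y_q} g(s)\bigl(v(t_j^*,y_j)-\max(v(t_q^*,s),v(t_q^*,y_q))\bigr)$ stated in the algorithm, with $j$ the next nonempty block index (empty blocks handled by skipping).

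For correctness I would run the standard two-way induction: every feasible allocation of the reduced problem corresponds to a path through the DP of equal virtual welfare (decompose it by type and read off block and threshold at each step), and conversely every DP path yields a feasible monotone allocation with at most $c$ distinct options, since the transition constraints are precisely monotonicity-in-$t$ plus ``at most $c$ blocks.'' Combined with the revenue $=$ virtual-welfare identity of \Cref{thm:charac:wo:IC}, this shows the DP's optimum equals the optimal revenue over all deterministic IC mechanisms with $\le c$ options, and the recovered $(p_q,R_q)$ realize it. For the running time: the table has $O(m\cdot c\cdot k)$ entries; each type's contribution term is precomputed by suffix sums in $O(mk)$ total; and even evaluating the max over $(\bar j,\bar x)$ with $\bar j\le j,\ \bar x\ge x$ naively costs $O(ck)$ per entry, for an overall $O(mck\cdot ck)=O(mk^2c^2)$; recovering $S$, the $y_q$'s and the prices is $O(mk+ck)$, which is dominated. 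This gives the claimed bound.

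The step I expect to be the main obstacle is not the DP bookkeeping but rigorously justifying the reduction in the first paragraph: that optimizing over \emph{arbitrary} deterministic IC mechanisms with at most $c$ menu options is without loss equal to optimizing over ``$\le c$ consecutive, threshold-monotone blocks with the lowest type of each block setting its refund to $v(t_q^*,y_q)$.'' This is where one must lean hard on \Cref{thm:discrete} — in particular on its second part, which shows refunds can be renormalized to these canonical values and that raising a block's refund never decreases revenue — and on checking that the price back-out via \eqref{eq:PI-p-disc} remains valid and IC when some blocks are empty and when the per-block starting quality $y_q$ is chosen independently of the other blocks subject only to monotonicity. Once that structural reduction is nailed down, the dynamic program and its analysis are routine.
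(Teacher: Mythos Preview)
Your proposal is correct and follows essentially the same approach as the paper: the same three-index state $U(\ell,j,x)$ (type, block index, quality threshold), the same transition constraints $\bar j\le j,\ \bar x\ge x$, the same per-type virtual-welfare contribution, and the same price back-out via the telescoped \eqref{eq:PI-p-disc}. If anything, you are more careful than the paper itself—the paper simply presents Algorithm~\ref{alg:DPdiscreteSmall2} and reads off the table size, whereas you spell out the structural reduction via \Cref{thm:det-ic}, \Cref{thm:discrete}, and \Cref{thm:charac:wo:IC}, and you make the $O(ck)$ per-entry cost explicit in deriving $O(mk^2c^2)$.
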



\section{Discrete setting with uniformly distributed item types}\label{sec:discrete_uniform}

Now we discuss the final model of the discrete setting with few item types where the item types are not necessarily ordered. As before, we have $m,n<\infty$, and there are $k$ item types with $k \ll m,n$. We still assume that the buyer types are ordered. In the third setting we assume that the item types are distributed uniformly. Since the buyer's value is a function of the buyer's type and the item's type, the conditional value distribution $G_t$ is uniform over a set of size $k$. Note that the number of distinct values is $n\le mk$. This setting is also algorithmically interesting when $k \ll m,n$. As before, for ease of notation, we rename item types so that $S=[k]$. 

Although item types in this setting are not ordered, because they are uniformly distributed, this setting can be ``reduced'' to the ordered-item-types setting of Section~\ref{sec:discrete_ordered} by permuting buyer values. We show that, as a consequence, the optimal menu size for this setting is still small, in particular at most $k+1$.

Note that in this model, different types of buyers have different preferences towards different item types. However, since the item type is drawn uniformly at random, for buyer of type $t$, there exists a permutation $\sigma_t:[k]\rightarrow[k]$, such that $v(t,\sigma_t(1))\le v(t,\sigma_t(2))\le \dots\le v(t,\sigma_t(k))$. Without loss of generalization, we can assume that for buyer of type $t$, $v'(t,s)=v(t,\sigma_t(s))$ for every $s\in [k]$. By switching from $v$ to $v'$, we incur no change on the revenue of any menu. Thus we can view $v'$ as the true value for each type of buyer, which brings us back to the setting in Section~\ref{sec:discrete_ordered}. 

Applying results in Section~\ref{sec:discrete_ordered}, we obtain the following theorems.

\begin{lemma}\label{thm:discrete_uni}
    For the discrete setting with uniformly distributed item types with $|S|=k$, any deterministic refund mechanism can be implemented with at most $k+1$ menu options. Furthermore, we may assume without any loss in the mechanism's revenue, that the refund corresponding to the $0$th option is $v_{\min}$; that corresponding to the $i$th option for $i\le k$ is equal to the value $v(t,\sigma_t(i))$ of the smallest buyer type $t$ that purchases this option; and that corresponding to option $k+1$ is $v_{\max}$. 
\end{lemma}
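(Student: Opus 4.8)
The plan is to reduce this setting to the ordered-item-types setting of Section~\ref{sec:discrete_ordered} and invoke \Cref{thm:discrete}. The crux is to make precise the claim, already sketched above, that relabeling item types separately for each buyer type does not change the revenue of any refund mechanism. Fix a buyer type $t$ and a permutation $\sigma_t$ of $[k]$ with $v(t,\sigma_t(1)) \le \cdots \le v(t,\sigma_t(k))$, and set $v'(t,s) := v(t,\sigma_t(s))$. Since the item type $s$ is drawn from $\mathrm{Unif}[k]$ independently of $t$, the law of $v'(t,s)$ under $s\sim\mathrm{Unif}[k]$ equals the law of $v(t,s)$ — both are the uniform distribution over the multiset $\{v(t,1),\dots,v(t,k)\}$ — so the conditional value CDF $G_t$ is unchanged by the relabeling. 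The first step of the proof is to observe that, by the formulation in Section~\ref{sec:returnpolicy}, a refund mechanism's outcomes (the chosen menu option $(p_i,R_i)$, the keep/return decision $\indic{v \ge R_i}$, the buyer's Phase~1 and Phase~2 utilities, and the seller's revenue) depend only on $\{G_t\}_{t\in T}$ and never reference $s$ directly; hence the expected revenue of every refund mechanism is identical in the $v$-instance and the $v'$-instance.

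Second, I would verify that the relabeled $v'$-instance is a legitimate instance of the Section~\ref{sec:discrete_ordered} model. By construction $v'(t,s)$ is weakly increasing in $s$ for every fixed $t$, which is exactly the ordered-item-types assumption. Assumption~\ref{assump:valuefunction} (first-order stochastic dominance across buyer types) is preserved automatically, being a statement about the CDFs $G_t$, which are unchanged; and $m,n,k$, as well as $v_{\min}$ and $v_{\max}$, are unchanged.

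Third, I would apply \Cref{thm:discrete} to the $v'$-instance: every deterministic refund mechanism can be implemented with at most $k+1$ options, the $0$th option having refund $v_{\min}$, the $i$th option for $i\le k$ having refund equal to $v'(t,i)$ for the smallest type $t$ purchasing it, and option $k+1$ having refund $v_{\max}$. Translating $v'(t,i) = v(t,\sigma_t(i))$ back into original coordinates yields precisely the statement of \Cref{thm:discrete_uni}, and since revenue is preserved under the relabeling, the resulting menu — read off in original coordinates — achieves the same revenue in the original instance.

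I expect the only genuine subtlety to be in the first step: one must be careful because the permutations $\sigma_t$ differ across buyer types, so the relabeling is not a single global permutation of $S$, and one should confirm that no cross-type consistency of the item-type labels is ever used — which holds because, once conditioned on $t$, the mechanism sees only the realized value $v\sim G_t$. Everything after that is a direct appeal to \Cref{thm:discrete} plus routine notational translation.
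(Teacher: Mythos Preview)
Your proposal is correct and follows essentially the same approach as the paper: reduce to the ordered-item-types setting by applying a type-dependent permutation $\sigma_t$ to the item types (which leaves each $G_t$ unchanged since $s$ is uniform), then invoke \Cref{thm:discrete}. You have spelled out the reduction and its justification in somewhat more detail than the paper does, but the argument is the same.
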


\begin{theorem}
    For the discrete setting with $m$ buyer types and $k$ uniformly distributed item types, there exists a dynamic program that runs in time $O(mk^2)$ and returns the optimal deterministic refund mechanism. This mechanism offers at most $k+1$ menu options.
\end{theorem}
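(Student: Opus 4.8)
The plan is to reduce this model to the ordered-item-types setting of Section~\ref{sec:discrete_ordered} and then invoke Algorithm~\ref{alg:DPdiscrete_ordered} (together with its $O(mk^2)$ guarantee) as a black box, so the only new work is justifying the reduction. For each buyer type $t$, let $\sigma_t$ be a permutation of $[k]$ with $v(t,\sigma_t(1)) \le \cdots \le v(t,\sigma_t(k))$ and define the relabeled values $v'(t,s) := v(t,\sigma_t(s))$. I would first observe that, because the item type is drawn uniformly, the relabeling does not change the conditional value distribution: $G'_t$, the law of $v'(t,\cdot)$ under a uniformly drawn item type, is the uniform distribution over the same multiset $\{v(t,1),\dots,v(t,k)\}$ as $G_t$, hence $G'_t = G_t$ for every $t$. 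In particular Assumption~\ref{assump:valuefunction} still holds for the relabeled instance, and by construction $v'(t,\cdot)$ is nondecreasing for every $t$, so the relabeled instance is a bona fide instance of the ordered-item-types model.

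Next I would argue that the relabeling preserves the revenue of \emph{every} deterministic refund mechanism, so optimizing over the relabeled instance is the same as optimizing over the original one. A deterministic refund mechanism is a menu $(\{p_i, R_i\})$; the expected revenue extracted from a buyer of type $t$ who selects option $i$ is $p_i - R_i \Pr_{v\sim G_t}[v < R_i]$, and which option each type selects is governed by the functions $Q(t,\cdot)$, which again depend on $t$ only through $G_t$. Since $G'_t = G_t$, the set of IC deterministic refund mechanisms and their revenues coincide for the two instances, so an optimal deterministic mechanism for the relabeled instance, read back as a menu of price--refund pairs, is optimal for the original.

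It then remains to apply the results of Section~\ref{sec:discrete_ordered} to the relabeled instance: \Cref{thm:discrete} gives an optimal deterministic mechanism with at most $k+1$ options (this is exactly \Cref{thm:discrete_uni}), and Algorithm~\ref{alg:DPdiscrete_ordered} computes it in time $O(mk^2)$; here one should note that that algorithm indexes values by item type within each buyer type, so the per-type permutations $\sigma_t$ introduce no inconsistency. Computing all the $\sigma_t$ by sorting costs $O(mk\log k) = O(mk^2)$, which does not affect the bound. I expect the main (though mild) obstacle to be stating the first two steps cleanly --- the claim that uniformity makes the relabeling distribution-preserving, and the claim that a deterministic refund mechanism's outcome is a function of the conditional value distributions alone; once these are in place the theorem follows from the ordered-item-types machinery essentially verbatim.
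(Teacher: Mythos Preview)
Your proposal is correct and follows essentially the same approach as the paper: relabel item types per buyer type via the permutations $\sigma_t$ so that the uniformly-drawn item type yields the same conditional value distributions $G_t$, then invoke the ordered-item-types machinery of Section~\ref{sec:discrete_ordered} (Lemma~\ref{thm:discrete} and Algorithm~\ref{alg:DPdiscrete_ordered}) as a black box. If anything, your write-up is more careful than the paper's, which simply asserts that the switch from $v$ to $v'$ ``incur[s] no change on the revenue of any menu'' and then applies the earlier results.
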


\begin{theorem}
    For the discrete setting with $m$ buyer types and $k$ uniformly distributed item types, there exists a dynamic program that runs in time $O(mk^2c^2)$ and returns the optimal deterministic refund mechanism with menu size $\le c$.
\end{theorem}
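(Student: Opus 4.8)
The plan is to obtain this theorem as an immediate corollary of the value-permutation reduction sketched just above, together with the small-menu dynamic program developed for ordered item types in Section~\ref{sec:discrete_ordered} (Algorithm~\ref{alg:DPdiscreteSmall2}). So the proof is really two moves: make the reduction precise, and invoke the earlier result.

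First I would spell out the reduction. For each buyer type $t$, let $\sigma_t:[k]\to[k]$ be a permutation that sorts the item values, so $v'(t,s):=v(t,\sigma_t(s))$ is nondecreasing in $s$. Since the item type is drawn uniformly, each of $v(t,1),\dots,v(t,k)$ has probability $1/k$, so relabeling item types by $\sigma_t$ leaves the \emph{distribution} of a type-$t$ buyer's realized value unchanged. Consequently, for any fixed refund level $R$, the probability that a type-$t$ buyer keeps the item and the expected refund she collects are identical under $v$ and under $v'$; hence every deterministic refund menu $(\{p_i,R_i\})$ earns exactly the same expected revenue on the original instance as on the transformed instance, and vice versa. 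I would also verify that the transformed instance meets the ordered-item-types hypothesis: by Assumption~\ref{assump:valuefunction}, $G_t$ first-order stochastically dominates $G_{t'}$ whenever $t>t'$, and for uniform distributions over $k$-point supports this is equivalent to the $s$-th smallest value of $t$ being at least the $s$-th smallest value of $t'$ for every $s$, i.e. $v'(t,s)\ge v'(t',s)$ for all $s$ — exactly the ordering required in Section~\ref{sec:discrete_ordered}.

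Next, on the transformed instance I would run Algorithm~\ref{alg:DPdiscreteSmall2}, which by the corresponding theorem of Section~\ref{sec:discrete_ordered} runs in time $O(mk^2c^2)$ and returns the revenue-optimal deterministic refund menu with at most $c$ options for that instance (for $c\le k$; if $c\ge k+1$, \Cref{thm:discrete_uni} together with the full DP already gives a menu of size $k+1\le c$ in time $O(mk^2)$). Because revenue is preserved in both directions, this menu is also optimal among all deterministic refund menus of size $\le c$ for the original instance, and it can be output verbatim: a menu option is simply a (price, refund) pair, and a buyer's keep/return decision depends only on her realized value relative to the posted refund, which the reduction does not alter. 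The permutations $\sigma_t$ are computed by sorting, in $O(mk\log k)=O(mk^2c^2)$ total time, so the overall running time is unchanged, giving the claimed $O(mk^2c^2)$ bound and menu size $\le c$.

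The only point that needs care, and which I expect to be the main (minor) obstacle, is the bookkeeping around the fact that each buyer type carries its own permutation $\sigma_t$, so "item type $s$" no longer names a common underlying attribute across types; one must confirm that a single global menu with numeric refunds remains well defined and implementable on the original instance. Since refunds are plain real numbers and each buyer compares her realized value to the posted refund — and the realized-value distribution per type is identical before and after applying $\sigma_t$ — this causes no difficulty, and the optimality, menu-size, and running-time guarantees all transfer directly from Section~\ref{sec:discrete_ordered}.
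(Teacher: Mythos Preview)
Your proposal is correct and follows essentially the same approach as the paper: reduce to the ordered-item-types setting via the per-type sorting permutations $\sigma_t$ (which, under the uniform item distribution, leaves each type's value distribution and hence any menu's revenue unchanged), and then invoke Algorithm~\ref{alg:DPdiscreteSmall2} and its $O(mk^2c^2)$ guarantee from Section~\ref{sec:discrete_ordered}. If anything, you supply more detail than the paper does, in particular the explicit check that Assumption~\ref{assump:valuefunction} (FOSD) implies $v'(t,s)\ge v'(t',s)$ for all $s$ after sorting, and the bookkeeping that the output menu is a list of numeric $(p,R)$ pairs and hence transfers verbatim.
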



%
%
%
\bibliographystyle{splncs04}
%
\bibliography{References}

\newpage
\appendix

\section{Proofs and Discussions for Section~\ref{sec:formulation}}\label{sec:proofs}
We first prove Theorem~\ref{thm:return-refund-equiv}.

\vspace{0.1in}
\noindent
{\em Proof of Theorem \ref{thm:return-refund-equiv}.}
Refund mechanisms are by definition a special case of return policy mechanisms. We will now show the other direction. Consider any return policy mechanism $M=(p,z,r)$. We will construct a refund mechanism $M'$, one menu option at a time. For each buyer type $t$, we add to the menu an option $(p_t, R_t)$, where $R_t$ is defined procedurally as follows. We pick a random number uniformly between $0$ and $1$, call it $X$; let $S$ be the smallest value for which $z(t,S)\ge X$; we set $R_t=S$. If no such value $S$ exists, set $R_t=v_{\max}$. Observe that a buyer with type $t$ accepts the refund and returns the item if and only if her true value $v$ is smaller than the refund $R_t=S$. This is the probability that $X=z(t,S)$ is larger than $z(t,v)$. Because $X$ is uniformly distributed over $[0,1]$, this probability is equal to $1-z(t,v)$. We define prices for the menu options using the payment identity \eqref{eq:PI-p}. 

The mechanisms $M$ and $M'$ offer identical allocations to the buyer at all types.  Thus by the payment identity we know that the only remaining thing is to show that the choice of refunds $r(t,\vmax)$ at any type $t$ does not affect the seller's expected revenue. To see this, observe that for the two mechanisms $M=(p,z,r)$ and $\hat{M}=(\hat{p}, z, \hat{r})$, the payment identity \eqref{eq:PI-r} implies that for all $t,v$, $r(t,v)-r(t,\vmax)=\hat{r}(t,v)-\hat{r}(t,\vmax)$. Plugging this into the definition of $Q$ \eqref{eq:Def-Q}, we get that for all $t$ and $t'$, $Q(t,t')-r(t',\vmax) = \hat{Q}(t,t')-\hat{r}(t',\vmax)$. Finally, plugging this into the payment identity \eqref{eq:PI-p}, we get that $p(t)-r(t,\vmax)=\hat{p}(t)-\hat{r}(t,\vmax)$. Putting these together into the expression for revenue, we observe that the revenue is independent of the choice of $r(t,\vmax)$. This shows that mechanisms $M$ and $M'$ obtain same revenue and finished the proof of this theorem. \hfill $\square$

\subsection{Payment identity for the continuous setting}

We begin by establishing a payment identity that expresses the payments $p$ and the refunds $r$ in terms of the allocation function $z$. Recall that the buyer's second phase utility is given by the following expression when her true type is $t$ and she reports $t'$: $Q(t,t') = \mathbb{E}_{v\sim G_t}[\max_{v'}\{r(t', v')+v\cdot z(t', v')\}]$. We use $Q_t$ and $Q_{t'}$ respectively to denote the derivative of this function with respect to its first and second arguments respectively. We likewise use the notation $r_t$, $r_v$, $z_t$, and $z_v$ to denote the partial derivatives of the functions $r$ and $z$ with respect to the arguments $t$ and $v$ respectively.

\begin{definition}[Payment Identity]\label{Def:PaymentIdentity}
    Let $\vmax$ denote the maximum value in the joint support of the distributions $G_t$ for $t\in T$. For any allocation function $z$, the payment identities for payments $p$ and refunds $r$ are defined as:
    \begin{align}
        r(t, v) & = r(t,\vmax) + \vmax z(t,\vmax) - v z(t,v) - \int_{\sigma=v}^{\sigma=\vmax} z(t, \sigma) d \sigma. & \text{for all } t\in T \text{ and } v\in V \tag{PI-r-cont}\label{eq:PI-r}\\
        p(t) & = p(0) + \int_{0}^t Q_{t'} (\tau, \tau) d \tau.& \text{for all } t \in T \tag{PI-p-cont}
    \end{align}
\end{definition}

\noindent
The following lemma then establishes IC constraints purely in terms of the allocation function $z$.
\begin{lemma}\label{Lem:PaymentIdentity}
    In the continuous setting, a return policy mechanism $(p,z,r)$ satisfies the IC and IR constraints in \eqref{eq:formulation} if and only if:
    \begin{itemize}
        \item $z(t,v)$ is weakly increasing in $v$ for all $t\in T$. 
        \item The payments $p$ and refunds $r$ satisfy the payment identities \eqref{eq:PI-p} and \eqref{eq:PI-r} respectively.
        \item The function $Q$ as defined in \eqref{eq:Def-Q} satisfies for all $t$ and $t'$:
        $Q(t,t) + Q(t',t') \ge Q(t,t') + Q(t', t)$
        \item $p(0)\le Q(0,0)$.
    \end{itemize}
\end{lemma}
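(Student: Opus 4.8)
The plan is to decouple the two phases: Phase~2 is, slice-by-slice in the reported first-phase type $t'$, an ordinary single-item Myerson problem, and Phase~1 is then a ``menu of options'' problem with the nonlinear utility index $Q$; the two are glued together through the formula for $Q$ that the Phase~2 analysis produces.

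First I would fix $t'$ and look at Phase~2 in isolation: a buyer with true value $v$ reporting $v'$ receives $r(t',v')+v\,z(t',v')$, which is exactly a single-item environment with ``allocation'' $z(t',\cdot)$ and ``payment'' $-r(t',\cdot)$. By Myerson's characterization, Phase~2 IC holds at every $t'$ if and only if $z(t',\cdot)$ is weakly increasing in $v$ and $r(t',\cdot)$ obeys the envelope identity, which is exactly \eqref{eq:PI-r} anchored at $\vmax$ instead of at $0$. Under these conditions the indirect Phase-2 utility $u_{t'}(v):=\max_{v'}\{r(t',v')+v\,z(t',v')\}=r(t',v)+v\,z(t',v)$ is convex with $u_{t'}'(v)=z(t',v)$, and a Fubini / integration-by-parts step gives $Q(t,t')=u_{t'}(\vmax)-\int_0^{\vmax}G_t(\sigma)\,z(t',\sigma)\,d\sigma$, which exhibits how $Q$ depends on each of its two arguments separately. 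This establishes the first two bullets; the only Phase-2-specific fact left for later use is that $u_{t'}(\cdot)$ is weakly increasing, immediate from $u_{t'}'(v)=z(t',v)\ge 0$.

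Next I would handle Phase~1, where a buyer of true type $t$ picks a report $t'$ to maximize $-p(t')+Q(t,t')$. For the forward direction, adding the two IC inequalities for a pair $(t,t')$ yields at once $Q(t,t)+Q(t',t')\ge Q(t,t')+Q(t',t)$, the third bullet; the same two inequalities, divided by $|t-t'|$ and taken to the limit $t'\to t$ (using the regularity of the continuous differentiable setting so that $Q_{t'}$ exists and is continuous via the formula for $Q$ above), force $p'(t)=Q_{t'}(t,t)$, which integrates to \eqref{eq:PI-p}; and IR at $t=0$ is literally $p(0)\le Q(0,0)$. For the converse, assuming all four conditions: monotonicity of $z$ together with \eqref{eq:PI-r} gives Phase~2 IC by the sufficiency direction of Myerson; and for Phase~1, for $t>t'$ one writes $p(t)-p(t')=\int_{t'}^t Q_{t'}(\tau,\tau)\,d\tau$ and $Q(t,t)-Q(t,t')=\int_{t'}^t Q_{t'}(t,\tau)\,d\tau$ and subtracts — the integrand difference $Q_{t'}(t,\tau)-Q_{t'}(\tau,\tau)$ is nonnegative because $Q_{t'}(s,\tau)$ is weakly increasing in $s$, which is the infinitesimal form of the third bullet — so $p(t)-p(t')\le Q(t,t)-Q(t,t')$, i.e.\ $t$ does not want to imitate $t'$; the case $t<t'$ is symmetric. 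Finally, IR at an arbitrary $t$ follows from the chain $-p(t)+Q(t,t)\ge -p(0)+Q(t,0)\ge -p(0)+Q(0,0)\ge 0$, where the first step is the Phase-1 IC just proved, the second uses Assumption~\ref{assump:valuefunction} ($G_t$ first-order stochastically dominates $G_0$) together with $u_0(\cdot)$ being weakly increasing, and the third is the fourth bullet.

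The main obstacle is not any single step but the differentiability bookkeeping: one must check that $Q_{t'}$ and its $t$-derivative are well-defined and that the envelope, Fubini, and limit manipulations are legitimate, which is precisely where the assumptions of the continuous differentiable setting ($g_t$ bounded, $G_t(v)$ differentiable in $t$) are invoked; care is also needed that the exceptional sets on which derivatives might fail to exist have measure zero and hence do not disturb the integral identities. The conceptual picture — Phase~2 is Myerson on each slice, Phase~1 is ``payment identity plus supermodularity'' — is otherwise routine.
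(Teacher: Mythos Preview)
Your overall decomposition matches the paper's approach: treat Phase~2 as a Myerson slice for each fixed $t'$ to obtain monotonicity of $z(t',\cdot)$ and the refund identity \eqref{eq:PI-r}, then handle Phase~1 via the envelope condition $p'(t)=Q_{t'}(t,t)$ and the 2-cycle inequality. Your IR chain $-p(t)+Q(t,t)\ge -p(0)+Q(t,0)\ge -p(0)+Q(0,0)\ge 0$ is a clean variant of what the paper does (the paper instead computes $Q(t,t)-p(t)=Q(0,0)-p(0)+\int_0^tQ_t(\tau,\tau)\,d\tau$ and uses $Q_t\ge 0$); both are fine.

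There is, however, a genuine gap in your sufficiency argument for Phase~1 IC. You assert that ``$Q_{t'}(s,\tau)$ is weakly increasing in $s$, which is the infinitesimal form of the third bullet,'' and then use this pointwise monotonicity to conclude $\int_{t'}^t[Q_{t'}(t,\tau)-Q_{t'}(\tau,\tau)]\,d\tau\ge 0$. But the third bullet, $Q(a,a)+Q(b,b)\ge Q(a,b)+Q(b,a)$ for all $a,b$, is only \emph{2-cycle} (weak) monotonicity; taking $a=\tau+\epsilon$, $b=\tau$ and expanding to second order yields merely $Q_{tt'}(\tau,\tau)\ge 0$ \emph{on the diagonal}, not the off-diagonal increasing-differences condition $Q_{tt'}(s,\tau)\ge 0$ that your integral comparison needs. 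Indeed, with your own formula $Q(t,t')=u_{t'}(\vmax)-\int_0^{\vmax} G_t(\sigma)\,z(t',\sigma)\,d\sigma$, the sign of $\partial_s Q_{t'}(s,\tau)$ depends on how $z(\tau,\sigma)$ varies in $\tau$, which none of the four bullets constrain. So the step ``the integrand difference is nonnegative'' does not follow from the hypotheses.

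The paper handles this step differently: it simply invokes Rochet's theorem (via the result, cited to Ashlagi et al., that on a convex one-dimensional type space 2-cycle monotonicity already implies implementability). That is the missing ingredient you need to pass from the third bullet to Phase~1 IC in the converse direction; your direct integral argument would require the stronger hypothesis of global supermodularity of $Q$, which is not what the lemma assumes.
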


\begin{proof}
Let us start by proving the payment identities. The second stage IC constraint in \eqref{eq:formulation} implies that for all $v$ and $t$, we have
\begin{align*}
    v \in \argmax_{v'} \{r(t,v')+ v z(t,v')\}.
\end{align*}
Therefore, the first order condition with respect to $v'$ implies that
\begin{align*}
    r_v(t,v) + vz_v(t,v) =0.
\end{align*}
Taking the integral of the above expression results in 
\begin{align}\label{eq:pf:Lemma:paymentidentity:secondstage}
    r(t,v)= &r(t,\vmax) + \int_{v}^{\vmax} \sigma z_v(t, \sigma) d \sigma \nonumber \\
    = & r(t,\vmax)+\vmax z(t,\vmax) - v z(t,v) - \int_{\sigma=v}^{\vmax} z(t, \sigma) d \sigma,
\end{align}
where we used integration by parts. 
The above equation is a necessary condition for the second phase IC constraint to hold. 
Also, notice that by fixing $t$ and writing the second phase IC constraint once for $v$ and $v'$ and once for $v'$ and $v$ and then taking the summation, we obtain
\begin{align*}
    \left( z(t,v) - z(t,v')\right) \left( v - v'\right) \ge 0.
\end{align*}
Therefore, we must have that, for any given $t$, $z(t, v)$ is increasing in $v$. 

We next establish that when $z(t, v)$ is increasing in $v$, then \eqref{eq:pf:Lemma:paymentidentity:secondstage} is a sufficient condition for the IC constraint. Notice that plugging \eqref{eq:pf:Lemma:paymentidentity:secondstage} into the IC constraint leads to 
\begin{align*}
  & r(t,\vmax) + \vmax z(t,\vmax) - \int_{\sigma=v}^{\vmax} z(t, \sigma) d \sigma  \\
  & \ge   r(t,\vmax) + \vmax z(t,\vmax) - v' z(t,v')- \int_{\sigma=v'}^{\vmax} z(t, \sigma)d \sigma+ v z(t, v').
\end{align*}
This is equivalent to having
\begin{align*}
    \int_{\sigma=v'}^{v} z(t,\sigma) d \sigma \ge z(t, v') \left( v - v'\right).
\end{align*}
This inequality in turn follows from the fact that $z$ is weakly increasing in $v$. This completes the proof of the payment identity for $r$ as well as the first bullet point in the lemma. 

We now proceed to the payment identity for $p$. The first stage IC constraint implies that for all $t$, we have 
\begin{align*}
    t \in \argmax_{t'} -p(t') + Q(t, t').
\end{align*}
Therefore, the first order condition implies that
\begin{align*}
    -p'(t) + Q_{t'}(t,t) =0.
\end{align*}
Taking integral of the above expression results in 
\begin{align}\label{eq:pf:Lemma:paymentidentity:firststage}
    p(t) = p(0) + \int_{0}^t Q_{t'} (\tau, \tau) d \tau.
\end{align}
Moreover, by writing the first stage IC constraint once for $t$ and $t'$ and once for $t'$ and $t$ and taking the summation, we obtain 
\begin{align*}
    Q(t, t) + Q(t', t') \ge Q(t, t') + Q(t', t).
\end{align*}
The fact that the above inequality is a sufficient condition follows by invoking Rochet's theorem (see, e.g., \cite{ashlagi2010monotonicity}).

Finally, we need to argue that it suffices to apply the IR constraint at $t=0$. First, we note that we can write:
\[Q(t,t) = Q(0,0)+\int_0^t Q_t(\tau,\tau)\,d\tau + \int_0^t Q_{t'}(\tau,\tau)\,d\tau\]
where $Q_t$ and $Q_{t'}$ are the derivatives of $Q$ in its first and second arguments respectively. Then, using \eqref{eq:pf:Lemma:paymentidentity:firststage}, we get:
\begin{align*}
    Q(t,t)-p(t) & = Q(0,0)+\int_0^t Q_t(\tau,\tau)\,d\tau + \int_0^t Q_{t'}(\tau,\tau)\,d\tau - p(0)- \int_0^t Q_{t'}(\tau,\tau)\,d\tau \\
    & = Q(0,0) - p(0) +\int_0^t Q_{t}(\tau,\tau)\,d\tau
\end{align*}
Finally, we note that $Q_t(t,t')\ge 0$ for all $t$ and $t'$ because values are increasing functions of $t$. Then, $Q(0,0) - p(0)\ge 0$ implies $Q(t,t) - p(t)\ge 0$ for all $t$.
This completes the proof of the lemma
\end{proof}

\subsection{A virtual value characterization for the continuous setting}

In this subsection we will prove Lemma~\ref{thm:charac:wo:IC} for the continuous setting.
Let us start by simplifying the function $Q(t, t')=\mathbb{E}_{v\sim G_t}\left[\max_{v'}\left\{r(t',v')+v z(t',v')\right\}\right]$. Let us define the mapping $(t,t', v) \mapsto v^*(t,t',v)$ such that 
\begin{align*}
    v^*(t,t',v) \in \argmax_{v'}\{r(t',v')+v z(t',v')\}.
\end{align*}
If there are multiple $v^*$ for which the equality holds, we just pick one of them arbitrarily. Using this notation, we can rewrite the function $Q(t, t')$ as 
\begin{align*}
    Q(t, t') = \mathbb{E}_{v\sim G_t}\left[r(t',v^*(t,t',v))+v z(t',v^*(t,t',v))\right]. 
\end{align*}
As a result, the envelop theorem implies that the partial derivative of $Q(t, t')$ with respect to $t'$ is given by
\begin{align*}
    Q_{t'}(t, t')= \mathbb{E}_{v\sim G_t}\left[r_t(t', v^*(t,t',v))+v z_t(t', v^*(t,t',v)) \right]. 
\end{align*}
We now proceed with the proof of the theorem. When the two IC constraints hold, Lemma \ref{Lem:PaymentIdentity} enables us to write down the auctioneer's objective in terms of a single function $z(\cdot, \cdot)$. In particular, by invoking the payment identities of Lemma \ref{Lem:PaymentIdentity} and the above expression, and recalling that $v^*(t,t,v) = v$ for all $t,v$, the platform's objective becomes
\begin{align*}
    \mathbb{E}_t\left[ 
    p(t)\right] - \mathbb{E}_{t,v}\left[r(t, v)\right] & =    \mathbb{E}_t\left[ 
     p(0) + \int_{0}^t Q_{t'} (\tau, \tau) d \tau\right]  - \mathbb{E}_{t,v}\left[r(t, v)\right] \\
     & = p(0) + \mathbb{E}_t\left[ 
     \int_{0}^t \mathbb{E}_{y\sim G_\tau} \left[ r_t(\tau, y) + y z_t(\tau, y) \right] d \tau\right]  - \mathbb{E}_{t,v}\left[r(t, v)\right]
\end{align*}
To simplify this expression further, we take derivatives with respect to $t$ of both sides of the payment identity~\eqref{eq:PI-r}:
\[ r_t(t,y) = r_t(t,\vmax)+\vmax z_t(t, \vmax) - y z_t(t,y) - \int_{\sigma=y}^{\sigma=\vmax} z_t(t,\sigma) d\sigma\]
Continuing our sequence of equations, we get:
\begin{align*}
    \mathbb{E}_{t,v}\left[ 
    p(t)-r(t, v)\right] & = p(0) + \mathbb{E}_{t}\left[ \int_0^t r_t(\tau,\vmax) +\vmax z_t(\tau, \vmax)d\tau\right] - \mathbb{E}_{t}\left[\int_0^t\mathbb{E}_{y\sim G_\tau}\left[\int_y^{\vmax} z_t(\tau,\sigma) d\sigma\right] d\tau \right] - \mathbb{E}_{t,v}\left[r(t, v)\right]
\end{align*}
Substituting the payment identity ~\eqref{eq:PI-r} yet again in this equation cancels out the $r(t,\vmax)+\vmax z(t,\vmax)$ terms and we get:
\begin{align*}
= p(0) -r(0,\vmax)- \vmax z(0,\vmax) - \mathbb{E}_{t}\left[\int_0^t\mathbb{E}_{y\sim G_\tau}\left[\int_y^{\vmax} z_t(\tau,\sigma) d\sigma\right] d\tau \right] + \mathbb{E}_{t,v}\left[v z(t,v) + \int_{\sigma=v}^{\sigma=\vmax} z(t,\sigma) d\sigma \right]
\end{align*}
We now further simplify the two integrals over $\sigma$.
\begin{align*}
\mathbb{E}_{t,v}\left[\int_{\sigma=v}^{\sigma=\vmax} z(t,\sigma)d\sigma\right] & = \mathbb{E}_t\left[\int_{v=\vmin}^{v=\vmax}\int_{\sigma=v}^{\sigma=\vmax} z(t,\sigma)d\sigma g_t(v)dv  \right]
= \mathbb{E}_t\left[\int_{\sigma=\vmin}^{\sigma=\vmax}\int_{v=\vmin}^{v=\sigma}g_t(v)z(t,\sigma)dv d\sigma  \right]\\
& = \mathbb{E}_t\left[ \int_{\sigma=\vmin}^{\sigma=\vmax} G_t(\sigma) z(t,\sigma)d\sigma\right]
= \mathbb{E}_{t,v}\left[\frac{G_t(v)z(t,v)}{g_t(v)} \right]
\end{align*}
Finally, we tackle the integral involving $z_t$. 
\begin{align*}
\mathbb{E}_{t}\left[\int_0^t\mathbb{E}_{y\sim G_\tau}\left[\int_y^{\vmax} z_t(\tau,\sigma) d\sigma\right] d\tau \right] & = \int_{t=0}^{t=1} \int_{\tau=0}^{\tau=t}\int_{y=\vmin}^{y=\vmax}\int_{\sigma=y}^{\sigma=\vmax} z_t(\tau,\sigma) g_\tau(y) d\sigma dy d\tau dt\\ 
& = \int_{\tau=0}^{\tau=1}\int_{y=\vmin}^{y=\vmax}\int_{\sigma=y}^{\sigma=\vmax} \int_{t=\tau}^{t=1}z_t(\tau,\sigma) g_\tau(y) dt d\sigma dy d\tau \\
& = \int_{\tau=0}^{\tau=1} \int_{y=\vmin}^{y=\vmax}\int_{\sigma=y}^{\sigma=\vmax} (1-\tau)z_t(\tau,\sigma) g_\tau(y) d\sigma dy d\tau\\
& = \int_{\sigma=\vmin}^{\sigma=\vmax} \int_{\tau=0}^{\tau=1}\int_{y=\vmin}^{y=\sigma} (1-\tau)z_t(\tau,\sigma) g_\tau(y) dy d\tau d\sigma \\
& = \int_{\sigma=\vmin}^{\sigma=\vmax} \int_{\tau=0}^{\tau=1} (1-\tau)z_t(\tau,\sigma) G_\tau(\sigma) d\tau d\sigma \\
\end{align*}
We will now integrate this final integrand by parts. 
\begin{align*}
    \int_{\tau=0}^{\tau=1} (1-\tau)z_t(\tau,\sigma) G_\tau(\sigma) d\tau & = (1-\tau)z(\tau,\sigma) G_\tau(\sigma) |_0^1 - \int_{\tau=0}^{\tau=1} z(\tau,\sigma) \left(-G_\tau(\sigma) + (1-\tau) \frac{dG_\tau(\sigma)}{dt}\right) d\tau \\
    & = -z(0,\sigma)G_0(\sigma) + \int_{\tau=0}^{\tau=1} z(\tau,\sigma) \left(G_\tau(\sigma) - (1-\tau) \frac{dG_\tau(\sigma)}{dt}\right) d\tau 
\end{align*}
Finally, substituting these expressions back into the expression for the revenue and renaming variables, we get,
\begin{align*}
& \mathbb{E}_{t,v}\left[p(t)-r(t, v)\right] & \\
& = p(0) -r(0,\vmax)- \vmax z(0,\vmax) 
+ \mathbb{E}_{v\sim G_0}\left[z(0,v)\frac{G_0(v)}{g_0(v)}\right] & & - \mathbb{E}_{t,v}\left[ z(t,v) \left(\frac{G_t(v)}{g_t(v)} - \frac{1-t}{g_t(v)} \frac{dG_t(v)}{dt}\right) \right]\\
& & & + \mathbb{E}_{t,v}\left[v z(t,v) + \frac{G_t(v)z(t,v)}{g_t(v)}\right] \\
& = p(0) -r(0,\vmax)- \vmax z(0,\vmax) 
+ \mathbb{E}_{v\sim G_0}\left[z(0,v)\frac{G_0(v)}{g_0(v)}\right] & & + \mathbb{E}_{t,v}\left[ z(t,v) \left\{ v + \frac{1-t}{g_t(v)} \frac{dG_t(v)}{dt} \right\} \right]\\
& = p(0) -r(0,\vmax)- \vmax z(0,\vmax) 
+ \mathbb{E}_{v\sim G_0}\left[z(0,v)\frac{G_0(v)}{g_0(v)}\right] & & + \mathbb{E}_{t,v}\left[ z(t,v) \phi(t,v)\right]
\end{align*}

\noindent
In order to eliminate the $t=0$ terms in the above expression, we consider the utility of the buyer at type $0$.
Writing the utility as a function of type $t$ and applying the payment identity, we get,
\begin{align*}
     & -p(t)+Q(t,t) = -p(t) + \mathbb{E}_{v\sim G_t}\left[ r(t,v) + v z(t, v)\right] =  - p(0) - \int_{0}^t Q_{t'} (\tau, \tau) d \tau + \mathbb{E}_{v\sim G_t}\left[ r(t,v) + v z(t, v)\right] 
\end{align*}
Performing the same simplifications as above, we obtain
\begin{align*}
    & -p(t)+Q(t,t) = - p(0) + r(0,\vmax) + \vmax z(0,\vmax) 
- \mathbb{E}_{v\sim G_0}\left[z(0,v)\frac{G_0(v)}{g_0(v)}\right] - \int_{\sigma=\vmin}^{\sigma=\vmax} \int_{\tau=0}^{\tau=t} z(t,\sigma) \frac{dG_t(\sigma)}{dt} d\tau d\sigma  
\end{align*}
When $t=0$, the last term is $0$. We can therefore substitute the equation back into the expected revenue equation to obtain:
\begin{align*}
\mathbb{E}_{t,v}\left[p(t)-r(t, v)\right] = p(0)-Q(0,0) + \mathbb{E}_{t,v}\left[ z(t,v) \phi(t,v)\right]
\end{align*}
completing the proof of the lemma.
\hfill $\square$

\subsection{Payment identities and virtual values for the discrete setting}

Next we consider the discrete setting and prove a counterpart of Lemma~\ref{Lem:PaymentIdentity} and Lemma~\ref{thm:charac:wo:IC} for the discrete setting. Recall that in the discrete setting, there is a finite number of discrete buyer types and possible values. We assume buyer's type set and value set being denoted by $T=\{0=t_1<t_2<\cdots<t_m=1\}$ and $V=\{v_1<v_2<\cdots<v_n\}$. Each buyer type appears with probability $1/m$. The discrete case looks very similar to the continuous case. We will now expand on the details. 

We first define a payment identity and virtual values.

\begin{definition}[Discrete payment identity and virtual values]\label{Def:DiscPaymentIdentity}
    For any allocation function $z$, the discrete payment identities for payments $p$ and refunds $r$ are defined as:
    \begin{align}
        r(t_i,v_j) &= r(t_i,v_1)+z(t_i,v_1)\cdot v_1 - z(t_i,v_j)\cdot v_j +\sum_{x=2}^{j}z(t_i,v_x)(v_x-v_{x-1}) & \forall  i\in [m], j\in [n] \label{eq:PI-r-disc}\tag{PI-r-disc}\\
        p(t_i) & =p(t_1)+\sum_{x=2}^{i} (Q(t_{x},t_{x})-Q(t_{x},t_{x-1}))& \forall i \in [m] 
        \tag{PI-p-disc}
    \end{align}
We define the buyer's virtual value function as
    \begin{align*}
        \phi(t_i,v_j) = v_j -(v_j-v_{j-1})\cdot (m-i) \cdot \frac{G_{t_i}(v_{j-1})-G_{t_{i+1}}(v_{j-1})}{g_{t_i}(v_j)} & \,\forall i\in [m], j\in [n] \tag{VV-disc}
    \end{align*}
\end{definition}
Note the similarity between the discrete virtual value and the continuous version in Definition~\ref{def:virtual:main}. Here $\frac{1-F(t_i)}{f(t_i)}=(m-i)$; the discrete derivative of the distribution function $G_{t_i}(v_j)$ along the value dimension is $g_{t_i}(v_j)/(v_j-v_{j-1})$, and the role of the discrete derivative of $G$ along the type dimension is played by $-(G_{t_i}(v_{j-1})-G_{t_{i+1}}(v_{j-1}))$.

Armed with these definitions, we have the following counterparts of the payment identities and virtual value characterization. 
\begin{lemma}\label{Lem:PaymentIdentity-disc}
    In the discrete setting, a return policy mechanism $(p,z,r)$ satisfies the IC and IR constraints in \eqref{eq:formulation} if and only if:
    \begin{itemize}
        \item $z(t,v)$ is weakly increasing in $v$ for all $t\in T$. 
        \item The payments $p$ and refunds $r$ satisfy the payment identities \eqref{eq:PI-p-disc} and \eqref{eq:PI-r-disc} respectively.
        \item The function $Q$ as defined in \eqref{eq:Def-Q} satisfies for all $t$ and $t'$:
        $Q(t,t) + Q(t',t') \ge Q(t,t') + Q(t', t)$
        \item $p(0)\le Q(0,0)$.
    \end{itemize}
\end{lemma}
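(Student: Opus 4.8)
\emph{Proof plan.} The plan is to follow the proof of Lemma~\ref{Lem:PaymentIdentity} essentially line by line, replacing each calculus step --- first-order conditions, the envelope theorem, integration by parts --- with its discrete analogue: pairwise incentive inequalities, differencing of consecutive constraints, and telescoping (summation by parts). As there, I split the argument into (i) the second-phase constraints, which for each fixed first-phase report $t$ form a single-parameter screening problem in the scalar type $v\in V$; (ii) the first-phase constraints, which (once $z,r$ and hence $Q$ are fixed) form a single-parameter screening problem in $t$; and (iii) the reduction of individual rationality to the single inequality $p(0)\le Q(0,0)$.

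\emph{Second phase.} Fix $t_i$. The constraint $r(t_i,v)+v\,z(t_i,v)\ge r(t_i,v')+v\,z(t_i,v')$ for all $v,v'\in V$ is exactly discrete Myerson screening with allocation $z(t_i,\cdot)$ and transfer-to-the-agent $r(t_i,\cdot)$. Summing the constraint for the pair $(v_j,v_{j'})$ with its reverse yields $(z(t_i,v_j)-z(t_i,v_{j'}))(v_j-v_{j'})\ge 0$, so $z(t_i,\cdot)$ is weakly increasing in $v$ (the discrete version of ``$r_v+v z_v=0$ forces monotonicity''). Writing $u(t_i,v_j):=r(t_i,v_j)+v_j z(t_i,v_j)$, the two consecutive constraints between $v_{j-1}$ and $v_j$ squeeze $u(t_i,v_j)-u(t_i,v_{j-1})$ into the interval $[(v_j-v_{j-1})z(t_i,v_{j-1}),\,(v_j-v_{j-1})z(t_i,v_j)]$; adopting the canonical (upper) value and telescoping over $j$ gives exactly \eqref{eq:PI-r-disc}, the discrete analogue of integrating $r_v+v z_v=0$ by parts. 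Conversely, if $z(t_i,\cdot)$ is weakly increasing and $r$ is defined by \eqref{eq:PI-r-disc}, substituting \eqref{eq:PI-r-disc} into the constraint for an arbitrary pair $(v_j,v_{j'})$ reduces it to an inequality between two sums that holds termwise by monotonicity of $z$ --- precisely the discrete counterpart of the step ``$\int_{v'}^{v}z\ge z(t,v')(v-v')$'' in the proof of Lemma~\ref{Lem:PaymentIdentity}. Hence the second-phase constraints are equivalent to the first two bullets of the lemma, restricted to $r$.

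\emph{First phase.} With $z,r$, hence $Q$ from \eqref{eq:Def-Q}, fixed, the constraint $-p(t_i)+Q(t_i,t_i)\ge -p(t_{i'})+Q(t_i,t_{i'})$ for all $i,i'$ is a single-parameter screening problem in $t$. Summing it for $(t_i,t_{i'})$ with its reverse gives the 2-cycle inequality $Q(t_i,t_i)+Q(t_{i'},t_{i'})\ge Q(t_i,t_{i'})+Q(t_{i'},t_i)$. The consecutive constraints between $t_{i-1}$ and $t_i$ squeeze $p(t_i)-p(t_{i-1})$ into an interval --- nonempty precisely by the 2-cycle condition --- whose upper endpoint $Q(t_i,t_i)-Q(t_i,t_{i-1})$ telescopes to \eqref{eq:PI-p-disc}, the discrete analogue of $p(t)=p(0)+\int_0^t Q_{t'}(\tau,\tau)\,d\tau$. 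For the converse (sufficiency of \eqref{eq:PI-p-disc} together with the 2-cycle condition), I would invoke Rochet's theorem for one-dimensional type spaces exactly as the continuous proof does; in the deterministic case of interest one can argue directly, as in the proof of Lemma~\ref{thm:menusize_1}, by telescoping consecutive constraints and using that $Q(\cdot,t_x)-Q(\cdot,t_{x-1})$ is nondecreasing in its first argument (which there follows from $R(\cdot)$ being weakly decreasing --- \Cref{thm:det-ic} --- together with Assumption~\ref{assump:valuefunction}).

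\emph{Reducing IR to $t=0$.} Using \eqref{eq:PI-p-disc} and telescoping, $Q(t_i,t_i)-p(t_i)=\bigl(Q(0,0)-p(0)\bigr)+\sum_{x=2}^{i}\bigl(Q(t_x,t_{x-1})-Q(t_{x-1},t_{x-1})\bigr)$, and each summand is $\ge 0$ because $Q(\cdot,t_{x-1})=\mathbb{E}_{v\sim G_{\cdot}}\bigl[\max_{v'}\{r(t_{x-1},v')+v\,z(t_{x-1},v')\}\bigr]$ is weakly increasing in its first argument by the FOSD ordering of Assumption~\ref{assump:valuefunction} (the integrand is weakly increasing in $v$). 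Thus $p(0)\le Q(0,0)$ is equivalent to the whole IR family, mirroring the ``$Q_t\ge 0$'' step of the continuous proof. I expect the only real obstacle to be the two sufficiency directions: checking that telescoping the consecutive second- and first-phase constraints actually recovers \emph{every} incentive constraint. In the second phase this is routine (termwise from monotonicity of $z$); in the first phase it is where Rochet's theorem --- or, in the deterministic case, the monotonicity of $Q(\cdot,t_x)-Q(\cdot,t_{x-1})$ --- does the work, and this is the step that warrants the most care. The remaining manipulations are routine discrete bookkeeping that parallel the continuous computation verbatim.
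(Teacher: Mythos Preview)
Your proposal is correct and takes essentially the same approach as the paper. The paper's own proof actually defers three of the four bullets (monotonicity, the 2-cycle condition, and the IR reduction) entirely to ``similar proofs from Lemma~\ref{Lem:PaymentIdentity}'' and only writes out the two payment identities explicitly; your plan fills in exactly those deferred discrete analogues and handles the sufficiency direction via Rochet (as the paper does in the continuous case), so there is no substantive difference in route.
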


\begin{lemma}\label{Lem:vv-char-disc}
    The expected revenue of any BIC mechanism $(p, z, r)$ in the discrete setting is given by:
    \begin{align*}
        \mathbb{E}_{t,v}[p(t)-r(t,v)] = p(t_1)-Q(t_1, t_1)+\mathbb{E}_{t,v}[\phi(t,v)z(t,v)]
    \end{align*}
    where $p(t_1)-Q(t_1, t_1)$ is the utility of a buyer with type $t_1$ and is, by default, assumed to be $0$.
\end{lemma}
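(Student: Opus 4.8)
\textbf{Proof plan for Lemma~\ref{Lem:vv-char-disc}.}
The plan is to mirror the continuous-setting argument, replacing integrals by summations and derivatives by finite differences. First I would use the discrete payment identity \eqref{eq:PI-p-disc} to telescope: writing $p(t_i) = p(t_1) + \sum_{x=2}^{i}(Q(t_x,t_x)-Q(t_x,t_{x-1}))$, the expected payment $\frac{1}{m}\sum_i p(t_i)$ can be rearranged by swapping the order of summation so that each term $Q(t_x,t_x)-Q(t_x,t_{x-1})$ is weighted by the number of types $t_i$ with $i\ge x$, i.e. by $(m-x+1)/m$. This is the discrete analogue of the Fubini step $\mathbb{E}_t[\int_0^t Q_{t'}(\tau,\tau)d\tau] = \int_0^1 (1-\tau)Q_{t'}(\tau,\tau)d\tau$, with $1-\tau$ replaced by $(m-x+1)/m$ (or $(m-x)/m$ after a reindexing, which will be the source of the $(m-i)$ factor appearing in \eqref{eq:vv-disc}).

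Next I would expand $Q(t_x,t_x)-Q(t_x,t_{x-1})$ using the definition $Q(t,t')=\mathbb{E}_{v\sim G_t}[\max\{v,R(t')\}]$ — or more generally the definition \eqref{eq:Def-Q} together with the second-stage payment identity \eqref{eq:PI-r-disc}, which expresses $r(t,v)+vz(t,v)$ in terms of $z(t,\cdot)$ alone. Substituting the discrete payment identity for $r$ into $Q$ lets me write $Q(t_x, t_{x'})$ purely in terms of $z(t_{x'},\cdot)$ and the distribution $G_{t_x}$, so that the difference $Q(t_x,t_x)-Q(t_x,t_{x-1})$ becomes a sum over values $v_j$ of $z$-differences weighted by probabilities, with the ``cross'' term picking up $G_{t_x}(v_{j-1})$ against $z(t_{x-1},v_j)-z(t_x,v_j)$ type quantities. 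Then I would also substitute \eqref{eq:PI-r-disc} directly into $-\mathbb{E}_{t,v}[r(t,v)]$, so that the $r(t,v_1)$ and $z(t,v_1)v_1$ boundary terms cancel against the corresponding boundary contributions coming from the $Q$ expansion, exactly as the $r(t,\vmax)+\vmax z(t,\vmax)$ terms cancelled in the continuous proof. What survives should collect, for each pair $(t_i,v_j)$, the coefficient of $z(t_i,v_j)$, which I claim equals $\phi(t_i,v_j)$ as defined in \eqref{eq:vv-disc}; here the factor $(v_j-v_{j-1})$ comes from the discrete value-difference and $G_{t_{i+1}}(v_{j-1})-G_{t_i}(v_{j-1})$ from the discrete type-difference of $G$. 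Finally, to remove the $t_1$ boundary terms I would repeat the same simplification on the utility expression $-p(t_1)+Q(t_1,t_1)$ of the lowest type and verify its value, then substitute back, yielding $\mathbb{E}_{t,v}[p(t)-r(t,v)] = p(t_1)-Q(t_1,t_1)+\mathbb{E}_{t,v}[\phi(t,v)z(t,v)]$.

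The main obstacle will be bookkeeping the index shifts correctly: the discrete summation-by-parts in the value dimension (turning $\sum_x z(t_i,v_x)(v_x-v_{x-1})$ into a sum of $G_{t_i}(v_{x-1})z$-type terms) and in the type dimension (the telescoping of \eqref{eq:PI-p-disc}) each introduce off-by-one shifts, and I must make sure the final coefficient lands on $(m-i)(G_{t_{i+1}}(v_{j-1})-G_{t_i}(v_{j-1}))(v_j-v_{j-1})/g_{t_i}(v_j)$ and not some neighbour of it. It will help to handle the endpoint indices $j=1$ and $i=m$ separately (where $G_{t_{m+1}}$ should be read as $1$, or the corresponding term vanishes) to confirm the boundary terms cancel cleanly rather than leaving a residual. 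Everything else is a routine transcription of the continuous argument, since \cref{thm:det-ic} and \cref{Lem:PaymentIdentity-disc} already guarantee that IC/IR reduce to monotonicity of $z$ plus the payment identities, so no additional constraint handling is needed for the revenue identity itself.
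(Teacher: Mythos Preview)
Your proposal is correct and follows essentially the same approach as the paper: substitute \eqref{eq:PI-p-disc} into $\mathbb{E}_t[p(t)]$ and swap summation order to produce the $(m-x+1)$ weights, expand the $Q$-differences and $\mathbb{E}_{t,v}[r(t,v)]$ via \eqref{eq:PI-r-disc}, cancel the $r(t_i,v_1)+z(t_i,v_1)v_1$ boundary terms, and collect the coefficient of $z(t_i,v_j)$ to recover $\phi(t_i,v_j)$, with the residual $t_1$ terms packaged as $p(t_1)-Q(t_1,t_1)$. The index-shift bookkeeping you flag is exactly where the paper's proof spends its effort as well.
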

The seller's optimization problem in the discrete setting is identical to that in program \eqref{eq:formulation}.

\vspace{0.1in}
\noindent
{\em Proof of Lemma~\ref{Lem:PaymentIdentity-disc}.}
The first, third and fourth bullet point can be proved with similar proofs from Lemma~\ref{Lem:PaymentIdentity}, so we will only prove the payment identity, which is the second bullet point.

We first prove the payment identity for $r$. The second stage IC constraint in \eqref{eq:formulation} implies that for all $t_i$ and $v_j$, we have
\begin{align*}
    v_j \in \argmax_{v} \{r(t_i,v)+ v_j z(t_i,v)\}.
\end{align*}
Therefore, this implies that for any $t_i$ and $v_j$
\begin{align*}
    r(t_i,v_j) - r(t_i,v_{j-1}) =  -z(t_i,v_j)(v_j-v_{j-1}).
\end{align*}
Taking the sum from of the above expression results in 
\begin{align}\label{eq:pf:Lemma:paymentidentity-disc:secondstage}
    r(t_i,v_j) &= r(t_i,v_1)+z(t_i,v_1)\cdot v_1 - z(t_i,v_j)\cdot v_j +\sum_{x=2}^{j}z(t_i,v_x)(v_x-v_{x-1}) & \forall  i\in [m], j\in [n] 
\end{align}
which proves the payment identity for $r$. 

Now we proceed to the payment identity for $p$. The first stage IC constraint implies that for all $t_i$, we have 
\begin{align*}
    t_i \in \argmax_{t} -p(t) + Q(t_i, t).
\end{align*}
Therefore, this implies that for any $t_i$
\begin{align*}
    p(t_i)-p(t_{i-1}) = Q(t_i,t_i)-Q(t_i,t_{i-1}).
\end{align*}
Taking the sum of the above expression from $2$ to $i$ results in 
\begin{align}\label{eq:pf:Lemma:paymentidentity-disc:firststage}
    p(t_i) & =p(t_1)+\sum_{x=2}^{i} (Q(t_{x},t_{x})-Q(t_{x},t_{x-1}))& \forall i \in [m]
\end{align}
which proves the lemma.

\hfill$\square$

\vspace{0.1in}
\noindent
{\em Proof of Lemma~\ref{Lem:vv-char-disc}.}

Like what we've done for the proof of Lemma~\ref{thm:charac:wo:IC}, we will first derive a simpler expression for $Q(t_i,t_j)$. Recall that $v^*(t,t',v) = v$ for all $t,t',v$, we have
\begin{align*}
    Q(t_x, t_{x'}) = \sum_{y=1}^{n}\left[(r(t_{x'},v_y)+z(t_{x'},v_y)v_y)g_{t_x}(v_y)\right]. 
\end{align*}

We can still write down the auctioneer's objective in terms of a single function $z(\cdot, \cdot)$. In particular, by invoking the payment identities of Lemma \ref{Lem:PaymentIdentity-disc}, the platform's objective becomes
\begin{align*}
    \mathbb{E}_t\left[ 
    p(t)\right] - \mathbb{E}_{t,v}\left[r(t, v)\right] & =    \frac{1}{m}\sum_{i=1}^{m}p(t_i)-\frac{1}{m}\sum_{i=1}^{m}\sum_{j=1}^{n}r(t_i,v_j)g_{t_i}(v_j) \\
     & = p(t_1) + \frac{1}{m}\sum_{i=2}^{m}\sum_{x=2}^{i}\sum_{y=1}^{n}\left[((r(t_x,v_y)+z(t_x,v_y)v_y)-(r(t_{x-1},v_y)+z(t_{x-1},v_y)v_y))g_{t_x}(v_y) \right]\\
     & \ \  \     -\frac{1}{m}\sum_{i=1}^{m}\sum_{j=1}^{n}(r(t_i,v_1)+z(t_i,v_1)\cdot v_1 - z(t_i,v_j)\cdot v_j +\sum_{x=2}^{j}z(t_i,v_x)(v_x-v_{x-1}))g_{t_i}(v_j)
\end{align*}

We will tackle the two sums by exchanging the order of sums. Consider the first part:
\begin{align*}
    &\sum_{i=2}^{m}\sum_{x=2}^{i}\sum_{y=1}^{n}\left[((r(t_x,v_y)+z(t_x,v_y)v_y)-(r(t_{x-1},v_y)+z(t_{x-1},v_y)v_y))g_{t_x}(v_y) \right]\\
    & = \sum_{i=2}^{m}\sum_{x=2}^{i}\sum_{y=1}^{n}(((r(t_x,v_1)+z(t_x,v_1)v_1 +\sum_{l=2}^{y}z(t_x,v_l)(v_l-v_{l-1}))\\
    & -(r(t_{x-1},v_1)+z(t_{x-1},v_1)v_1 +\sum_{l=2}^{y}z(t_{x-1},v_l)(v_l-v_{l-1})))g_{t_x}(v_y))\\
    & = \sum_{i=1}^{m}(r(t_i,v_1)+z(t_i,v_1)v_1)-m(r(t_1,v_1)+z(t_1,v_1)v_1)+\sum_{i=2}^{m}\sum_{x=2}^{i}\sum_{y=2}^{n}\sum_{l=2}^{y}(z(t_x,v_l)-z(t_{x-1},v_l))(v_l-v_{l-1})g_{t_x}(v_y)\\
    & = \sum_{i=1}^{m}(r(t_i,v_1)+z(t_i,v_1)v_1)-m(r(t_1,v_1)+z(t_1,v_1)v_1)+\sum_{i=2}^{m}\sum_{j=2}^{n}(m-i+1)(z(t_i,v_j)-z(t_{i-1},v_j))(v_j-v_{j-1})(1-G_{t_i}(v_{j-1}))
\end{align*}

For the second sum, we also apply the trick of exchanging the order of sums,
\begin{align*}
    &\sum_{i=1}^{m}\sum_{j=1}^{n}(r(t_i,v_1)+z(t_i,v_1)\cdot v_1 - z(t_i,v_j)\cdot v_j +\sum_{x=2}^{j}z(t_i,v_x)(v_x-v_{x-1}))g_{t_i}(v_j)\\
    & = \sum_{i=1}^{m}(r(t_i,v_1)+z(t_i,v_1)v_1)-\sum_{i=1}^{m}\sum_{j=1}^{n}z(t_i,v_j) v_jg_{t_i}(v_j)+\sum_{i=1}^{m}\sum_{j=2}^{n}z(t_i,v_j)(v_j-v_{j-1})(1-G_{t_i}(v_{j-1}))
\end{align*}

Finally, substituting these expressions back into the original expression for the revenue and renaming variables, we get,
\begin{align*}
    \mathbb{E}_{t,v}[p(t)-r(t,v)]&=p(t_1)-Q(t_1,t_1)+\sum_{i=1}^{m}\sum_{j=1}^{n}z(t_i,v_j)(v_jg_{t_i}(v_j) -(v_j-v_{j-1})(m-i) (G_{t_i}(v_{j-1})-G_{t_{i+1}}(v_{j-1})))\\
    &=p(t_1)-Q(t_1,t_1)+\mathbb{E}_{t,v}\left[ z(t,v) \phi(t,v)\right]
\end{align*}
completing the proof of the lemma.
\hfill$\square$

\end{document}